\newtheorem{proposition}{Proposition}
\newcommand{\interp}[1]{\left\llbracket #1 \right\rrbracket}
\newcommand{\concat}[0]{\!::\!}
\newcommand{\ket}[1]{|#1\rangle}
\newcommand{\bra}[1]{\langle #1 |}
\definecolor{darkgreen}{rgb}{0.3,0.8,0.3}
\tikzset{style green/.style={
    set fill color=green!50!lime!60,
    set border color=white,
    draw opacity=0
  },
  style cyan/.style={
    set fill color=cyan!90!blue!60,
    set border color=white,
    draw opacity=0
  },
  style orange/.style={
    set fill color=orange!80!red!60,
    set border color=white,
    draw opacity=0
  },
  style purple/.style={
    set fill color=purple!80!red!20,
    set border color=white,
    draw opacity=0
  },
  hor/.style={
    above left offset={-0.15,0.31},
    below right offset={0.15,-0.125},
    #1
  },
  ver/.style={
    above left offset={-0.1,0.3},
    below right offset={0.26,-0.15},
    #1
  },
  vertab/.style={
    above left offset={-0.1,0.15},
    below right offset={0.26,-0.1},
    #1
  }
}
\title{Architecture aware compilation of quantum circuits via lazy synthesis}
\author[1]{Simon Martiel}
\email{simon.martiel@atos.net}
\author[1,2,3]{Timothée Goubault de Brugière}
\email{timothee.goubault-de-brugiere@loria.fr}
\affiliation[1]{Atos Quantum Lab.\\Les Clayes-sous-bois, France}
\affiliation[2]{Laboratoire de Recherche en Informatique (LRI),\\ Orsay, France}
\affiliation[3]{Laboratoire Lorrain de Recherche en Informatique et ses Applications (LORIA),\\Nancy, France}
\begin{document}

\maketitle

\begin{abstract}
    Qubit routing is a key problem for quantum circuit compilation. It consists in rewriting a quantum circuit by adding the least possible number of instructions to make the circuit 
    compliant with some architecture's connectivity constraints. Usually, this problem is tackled via either SWAP insertion techniques or re-synthesis of portions of the circuit using architecture
    aware synthesis algorithms. In this work, we propose a meta-heuristic that couples the iterative approach of SWAP insertion techniques with greedy architecture-aware synthesis routines. We propose two new compilation algorithms based on this meta-heuristic and compare their performances to state-of-the-art quantum circuit compilation techniques for several standard classes of quantum circuits and 
    show significant reduction in the entangling gate overhead due to compilation.
\end{abstract}

\section{Introduction}

Compilation is a key step in any software stack. Programs are often specified using a high-level programming language that allows the programmer to describe the manipulation of the processor's memory using abstract structures. This high-level description is then refined, sometimes in several stages, until it can be fully expressed as a sequence of low level instructions that can be executed by the processor.
Quantum programming makes no exception. In order to leverage the power of a quantum processor, one needs to compile high-level quantum programs into lower level sequences of quantum instructions.
This compilation step is particularly critical in the case of so called NISQ processors \cite{Preskill_2018}. In these settings, the quantum instructions are prone to errors and the quantum memory undergoes decoherence phenomena leading to quite large error rates. Consequently, there is a strong need for efficient heuristics to reduce the instruction count while still satisfying the architecture's constraints.

One of the most challenging problems in the field of compilation of quantum programs is the \emph{qubit routing} problem. Most quantum processors come with a limited chip connectivity, only allowing a (usually) small number of couplings between the different qubits. The input circuit should therefore be altered in order to only make use of the available interactions. This problem is traditionally tackled via the insertion of additional SWAP gates inside the circuit in order to move logical qubits from one physical qubit to the other \cite{hirata2011, 6560634, sabre, bka, Childs2019CircuitTF}. These techniques are inherently inefficient in the sense that they can only add gates to the compiled circuits and usually ignore the nature of the computation. 
Most of these algorithms lead to quite large SWAP/CNOT overheads when compared to the original circuit size. These overheads can be detrimental to the success rate of the algorithm. 

More recently, people started investigating the transverse approach of synthesizing quantum circuits that are readily compliant with a given connectivity. This approach is usually restricted to a particular subset of quantum circuits such as linear reversible operators \cite{kissinger2019cnot,de2020quantum} or phase polynomials \cite{griend2020architectureaware, Amy_2020, Nash_2020}. These methods have been shown to usually outperform SWAP insertion techniques, at the cost of lacking universality. In particular, these techniques require to either formulate the quantum programs using a higher level data structure, or pre-process the input circuit to slice it up into trivial pieces and sub-circuits that fit the class of operator we can synthesize. In a recent work \cite{gheorghiu2020reducing}, Gheorghiu et al. attempted to couple ad hoc synthesis techniques for phase polynomials with different splitting techniques to circumvent H gates. Their approach tends to show that the performance of the resulting compiler greatly depends on the splitting technique.

In this work, we propose a framework that strictly generalizes standard SWAP insertion approaches via lazy architecture-aware synthesis of partial unitary operators in a target subgroup. We provide a general formulation of the framework that is agnostic in the target subgroup. We then show how SWAP insertion techniques can be described by picking the permutation group as target subgroup. Finally, we define two new routing algorithms by applying our framework to the group of Boolean linear reversible operators and Clifford operators.

Our meta-heuristic can be informally described as follows:
\begin{itemize}
    \item Pick a subgroup of unitary operators that are easy to represent classically. By easy we mean that their classical representation has a polynomial size in the number of qubits and can be efficiently updated for composition.
    \item Initialize a data structure representing the identity.
    \item Iterate over the input circuit:
    \begin{itemize}
        \item if the incoming gate belongs to the subgroup, update the current data structure with this gate,
        \item if not, figure out a way to synthesize a piece of the current data structure into a circuit such that one can safely insert the incoming gate in the output circuit.
    \end{itemize}
\end{itemize}

This paper is organized as follows. We start by formalizing the above succinctly described meta-algorithm using what we call the \emph{lazy synthesis} framework. Section \ref{sec:swaps} shows how a standard SWAP insertion algorithm from \cite{hirata2011} fits into this framework. We then extend this algorithm by using the group of linear Boolean reversible operators in section \ref{sec:cnots} and the Clifford group in section \ref{sec:cliffords}. Some benchmarks against standard classes of circuits are provided and discussed in section \ref{sec:benchmarks}, together with a comparison with recent works in general purpose compilation. Finally, we propose some possible extensions, and conclude in a last section.

\medskip

\section{The lazy synthesis framework}
In this section we present a general formulation of the \emph{lazy synthesis} meta-heuristic. 

\medskip
\noindent{\bf Notations:}
\begin{itemize}
    \item Circuits are words on a (potentially infinite) gate set. We use $\concat$ for concatenation, and $\varepsilon$ for the empty circuit.
    \item Given some gate $g$, we denote by $\tilde{g}$ its corresponding $n$-qubits unitary operator, and extend this notation to circuits. For instance, given a circuit $c = g_1\concat g_2$ as a word, the corresponding equation in $\mathcal{U}(2^n)$ is $\tilde{c} = \tilde{g_2}\;\cdot\;\tilde{g_1}$ where $\cdot$ stands for the standard linear operator composition.
        
\end{itemize}

To introduce our framework, we first need to introduce some conventions.
We will assume that the input circuit is a sequence of gates taken from a set $\mathcal{G}_{in}$, and that the output circuit should have gates in another gate set $\mathcal{G}_{out}$.
Here, we voluntarily use a quite broad notion of gate set. For instance, $\mathcal{G}_{out}$ could contain the exact same gates as $\mathcal{G}_{in}$ but with additional constraints, such as connectivity constraints.
We will also assume that we have access to some data structure $\mathcal{D} = \langle \mathcal{H}, \interp{.}, S, u, e \rangle$ representing a class of unitary operators, with the following constraints:
\begin{itemize}
    \item $\mathcal{H}$ is some set of classical descriptions. We will usually require these descriptions to be small (i.e. polynomial in the number of qubits and/or the number of input gates).
    \item $\interp . : \mathcal{H} \rightarrow U(2^n)$ is an interpretation of the descriptions in $\mathcal{H}$ as unitary operators.
    \item $S \subseteq \mathcal{G}_{in}$ is a subset of the input gate set. Our data structure $\mathcal{D}$ corresponds to the class of operators that can be implemented by circuits with gates from $S$.
    \item $u: \mathcal{H} \times S \rightarrow \mathcal{H}$ is an update function such that:
        $$ \interp{u(h, g)} = \tilde{g}.\interp{h} $$
        that is, $u$ is sound with respect to $\interp .$. Less formally $u$ updates $h$ into $u(g,h)$ by absorbing $g$ into $h$. We will usually require for $u$ to efficiently update $h$ (i.e. runs in polynomial time w.r.t. the size of $h$).
    \item $e: \mathcal{H} \times \overline{S} \rightarrow \mathcal{H} \times \mathcal{G}_{out}^* $, where $\overline{S}$ is the complement of $S$ in $\mathcal{G}_{in}$. The function $e$ is an extraction function such that:
        $$ h', c = e(h, g) \implies \tilde{g}.\interp{h} = \interp{h'}.\tilde{c} $$
Less formally, $e$ tells us how to commute $g$ with $h$ as the cost of updating $h$ into $h'$ and turning $g$ into a sub-circuit $c$. We will usually require $e$ to be efficient.
\end{itemize}

Equipped with such a data structure, we can describe our meta-heuristic as the simple recipe detailed in Algorithm \ref{alg:lazy_synth}.

\begin{algorithm}
    \caption{Lazy synthesis meta-heuristic}
    \SetKwInOut{Input}{input}
    \SetKwInOut{Output}{output}
    \Input{circuit $c_{in}$}
    \Output{a final operator $h$ and a circuit $c_{out}$}
        $h \gets Id$\;
        $c_{out} \gets \varepsilon$\;
        \For{$g$ in $c_{in}$}{
            \eIf{$g\in S$}{
                $h \gets u(h, g)$\;
            }{
                $h', c = e(h, g)$\;
                $h\gets h'$\;
                $c_{out} \gets c_{out}\concat c$\;
            }
        }
       \Return $h, c_{out}$\;
    \label{alg:lazy_synth}
\end{algorithm}

The main idea of the heuristic is to iteratively aggregate gates of $c_{in}$ in $h$ and $c_{out}$ while maintaining the invariant: $\tilde{c}_{in}[1..i] = \interp{h}\cdot \tilde{c}_{out}$. That is: after compiling gate $i$, the initial segment $c_{in}[1..i]$ is equivalent to the composition of the current output circuit $c_{out}$ followed by the current stored operator $h$.
It is easy to check the soundness of the algorithm using the expected properties of $u$ and $e$. The process of our meta-heuristic is illustrated in Fig~\ref{fig:process}.

\begin{figure}
    \centering
    \hspace*{-2cm}
    \includegraphics[scale=0.65]{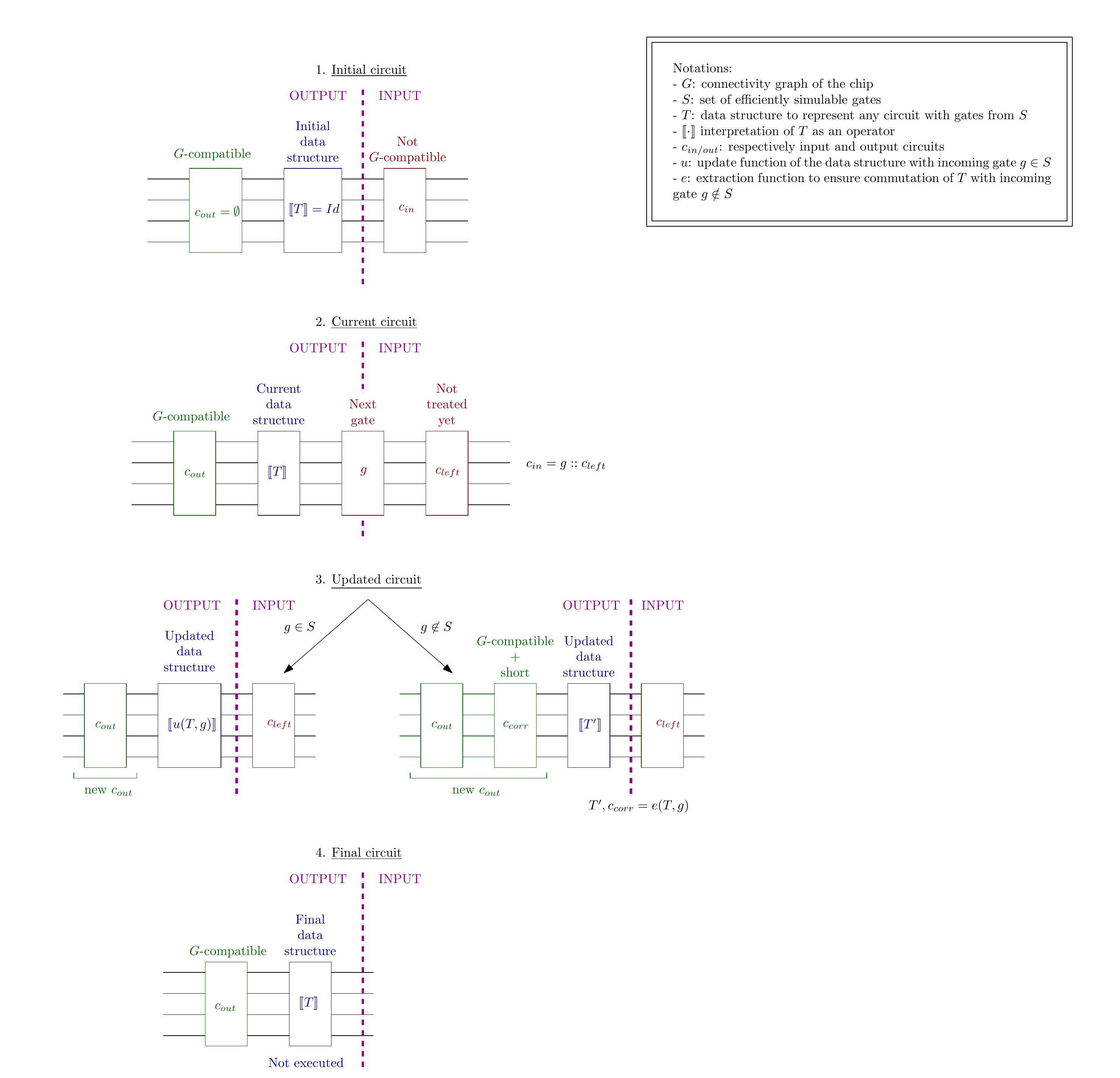}
    \caption{Illustration of Algorithm~\ref{alg:lazy_synth}. At any stage of the algorithm, we have the invariant $\tilde{c}_{left} \interp{T} \tilde{c}_{out}$ which is equal to the operator implemented by the input quantum circuit.}
    \label{fig:process}
\end{figure}

In other words, the gates in S are the ones we want to avoid executing by the quantum processor. As they belong to a group of efficiently simulable operators, our goal is to keep track classically of their action on the memory as long as possible with the use of our update function $u$. When a gate $g$ not belonging to S arises in the circuit, we try to minimize the quantity of extra gates needed to execute $g$ while keeping the functionality of the global operator. This is the goal of the extraction function $e$.

As you can notice, all the complexity of the heuristic lies in the implementation of the update and extraction functions $u$ and $e$. These functions will heavily rely on the underlying data structures.

In the next section, we show how to embed a SWAP insertion technique described in  \cite{hirata2011} into this framework. Later we will extend it to a broader set of operators to improve its performances, see Table~\ref{tab:my_label} for a summary of the different sets of operators considered in each section.

\section{Reformulation of swap insertion}\label{sec:swaps}

In \cite{hirata2011}, the authors propose a heuristic to iteratively rewrite a quantum circuit by inserting SWAP gates to route logical qubits.
In this approach, we will rely on the fact that elements in the group $S_n$ can be efficiently represented and manipulated. In order to represent an element $\sigma \in S_n$, we can simply store an array of integers $[\sigma(1), ..., \sigma(n)]$. Moreover, given the representations of two permutations $\sigma$ and $\pi$, the representation of $\sigma \circ \pi$ is simple to compute.

\medskip

\noindent{\bf Data structures.} We now describe how this algorithm is a particular case of our framework.

We first need to define $\mathcal{G}_{in}$, $\mathcal{G}_{out}$, and $S \subseteq \mathcal{G}_{in}$:
\begin{itemize}
    \item $\mathcal{G}_{in}$ contains any gate acting on at most $2$ qubits,
    \item $\mathcal{G}_{out}$ contains any gate acting on at most $2$ qubits and such that the gate is compatible with some connectivity graph $G$,
    \item finally $S = \{\textrm{SWAP}_{i, j}, i, j \in [n], i \neq j\}$ the set of all possible qubit SWAPs.
\end{itemize}
The classical data structure simply describes a qubit permutation: $\mathcal{D} = \langle S_n, \interp{.}, S, u, e \rangle$, where:
\begin{itemize}
    \item $S_n$ denotes the permutation group over $n$ elements, where $n=|V(G)|$ is the number of qubits.
    \item $\interp{.}$ trivially associates to a permutation the corresponding $n$-qubit unitary operator
    \item $u$ composes the current permutation with an incoming swap:
    $$u(\sigma, \textrm{SWAP}_{i, j}) = (i, j) \circ \sigma$$ 
\end{itemize}

\begin{table}[]
\begin{adjustwidth}{-1.5cm}{}
    \centering
    \begin{tabular}{|c|c|c|c|c|}
    \hline
        &$\mathcal{G}_{in}$ & $\mathcal{G}_{out}$ & subgroup & data structure\\
        \hline
        SWAP (Section 3) & $SU(2)$ + $SU(4)$ & SWAP + $SU(2)$ + $SU(4)$ & $S_n$ & arrays of indices\\
        \hline
        CNOT (Section 4) & CNOT + $SU(2)$ & CNOT + $SU(2)$ & $GL(n, \mathbb{F}_2)$ & invertible tables\\
        \hline
        Clifford (Section 5) & Clifford + Pauli rotations &$\{\operatorname{CNOT}, H, \sqrt{X}, R_Z\}$ & Clifford group & Tableaux\\
        \hline
    \end{tabular}
    \caption{This table summarizes the three instantiations of Algorithm \ref{alg:lazy_synth} for qubit permutations, linear boolean circuits, and Clifford circuits.}
    \label{tab:my_label}
\end{adjustwidth}
\end{table}

We now describe our extraction routine. Given some gate $g$ in the input circuit. If $g$ is such that $\sigma^{-1}(g)$ is compatible with $G$, we can simply use the fact that:
$$ g.\interp{\sigma} = \interp{\sigma}.\sigma^{-1}(g) $$ 
to set $e(\sigma, g) = \sigma, \sigma^{-1}(g)$. However, if $\sigma^{-1}(g)$ is not compatible with $G$, we need to produce a piece of $G$ compatible SWAP circuit $c_\pi$ implementing a permutation $\pi$ such that $\sigma'^{-1}(g)$ is compatible with $G$, with $\sigma' = \sigma \circ \pi^{-1}$. Then, we have that:

$$ g.\interp{\sigma} = \interp{\sigma\circ\pi^{-1}}.\sigma'^{-1}(g) .\interp{\pi} = \interp{\sigma\circ\pi^{-1}}.\sigma'^{-1}(g) .\tilde{c}_\pi$$

If we can produce such a circuit $c_\pi$, we can set $e(\sigma, g) = \sigma \circ \pi^{-1}, c_\pi::\sigma'^{-1}(g)$. We now describe how such a SWAP circuit is produced in Hirata et al. algorithm. 
Considering the fact that we need gate $\sigma'^{-1}(g) = (\pi\circ\sigma^{-1})(g)$ to be compatible with $G$, $\pi$ can be seen as a permutation bringing the qubits of $\sigma^{-1}(g)$ close to one another in $G$. Let $a, b$ be the pair of qubit on which $g$ acts and let $p=(\sigma^{-1}(a) = p_1, ..., p_k = \sigma^{-1}(b))$ be the shortest path from $\sigma^{-1}(a)$ to $\sigma^{-1}(b)$ in $G$.
The algorithm enumerates $k - 1$ permutations consisting in moving $\sigma^{-1}(a)$ toward $\sigma^{-1}(b)$ along $p$ and vice-versa until they meet somewhere along an edge of $p$.
For each of these permutations, the algorithm is called recursively for the next $w$ entangling gates, and the permutation leading to the lowest SWAP overhead is picked and committed to the output circuit, thus producing $c_\pi$. Figure \ref{fig:swaps} gives such an example of permutation enumeration. The general structure of such a recursive search is described in Appendix \ref{sec:rec_search}.
As expected, the performances of this algorithm heavily depend on the recursion depth parameter $w$.

The overall worst case complexity of this algorithm is $O(m n^{1 + w})$, with $m$ the number of entangling gates and $n$ the number of qubits, and neglecting the pre-computing of shortest-paths. Appendix \ref{app:example_swap} presents a step by step overview of the execution of this algorithm on a simple circuit for a LNN architecture. 

\begin{figure}[h]

\begin{tabular}{cccc}
(a)~~~\begin{tikzpicture}
\draw (0, 0) node[draw, rectangle, inner sep=1pt](n0){0};
\draw (1, 0) node[draw, circle, inner sep=1pt](n1){1};
\draw (2, 0) node[draw, circle, inner sep=1pt](n2){2};
\draw (0, -1) node[draw, circle, inner sep=1pt](n3){3};
\draw (1, -1) node[draw, circle, inner sep=1pt](n4){4};
\draw (2, -1) node[draw, circle, inner sep=1pt](n5){5};
\draw (0, -2) node[draw, circle, inner sep=1pt](n6){6};
\draw (1, -2) node[draw, rectangle, inner sep=1pt](n7){7};
\draw (2, -2) node[draw, circle, inner sep=1pt](n8){8};
\draw[thick, dashed] (n0) -- (n1);
\draw (n0) -- (n3);
\draw (n1) -- (n2);
\draw[thick, dashed] (n1) -- (n4);
\draw (n2) -- (n5);
\draw (n5) -- (n8);
\draw (n5) -- (n4);
\draw (n8) -- (n7);
\draw (n7) -- (n6);
\draw[thick, dashed] (n7) -- (n4);
\draw (n6) -- (n3);
\draw (n3) -- (n4);
\end{tikzpicture} & (b)~~~\begin{tikzpicture}
\draw (0, 0) node[draw, rectangle, inner sep=1pt](n0){0};
\draw (1, 0) node[draw, rectangle, inner sep=1pt](n1){7};
\draw (2, 0) node[draw, circle, inner sep=1pt](n2){2};
\draw (0, -1) node[draw, circle, inner sep=1pt](n3){3};
\draw (1, -1) node[draw, circle, inner sep=1pt](n4){1};
\draw (2, -1) node[draw, circle, inner sep=1pt](n5){5};
\draw (0, -2) node[draw, circle, inner sep=1pt](n6){6};
\draw (1, -2) node[draw, circle, inner sep=1pt](n7){4};
\draw (2, -2) node[draw, circle, inner sep=1pt](n8){8};
\draw[thick, dashed] (n0) -- (n1);
\draw (n0) -- (n3);
\draw (n1) -- (n2);
\draw[thick, <->] (n1) -- (n4);
\draw (n2) -- (n5);
\draw (n5) -- (n8);
\draw (n5) -- (n4);
\draw (n8) -- (n7);
\draw (n7) -- (n6);
\draw[thick, <->] (n7) -- (n4);
\draw (n6) -- (n3);
\draw (n3) -- (n4);
\end{tikzpicture} & (c)~~~\begin{tikzpicture}
\draw (0, 0) node[draw, circle, inner sep=1pt](n0){1};
\draw (1, 0) node[draw, rectangle, inner sep=1pt](n1){0};
\draw (2, 0) node[draw, circle, inner sep=1pt](n2){2};
\draw (0, -1) node[draw, circle, inner sep=1pt](n3){3};
\draw (1, -1) node[draw, rectangle, inner sep=1pt](n4){7};
\draw (2, -1) node[draw, circle, inner sep=1pt](n5){5};
\draw (0, -2) node[draw, circle, inner sep=1pt](n6){6};
\draw (1, -2) node[draw, circle, inner sep=1pt](n7){4};
\draw (2, -2) node[draw, circle, inner sep=1pt](n8){8};
\draw[thick, <->] (n0) -- (n1);
\draw (n0) -- (n3);
\draw (n1) -- (n2);
\draw[thick, dashed] (n1) -- (n4);
\draw (n2) -- (n5);
\draw (n5) -- (n8);
\draw (n5) -- (n4);
\draw (n8) -- (n7);
\draw (n7) -- (n6);
\draw[thick, <->] (n7) -- (n4);
\draw (n6) -- (n3);
\draw (n3) -- (n4);
\end{tikzpicture} & (d)~~~\begin{tikzpicture}
\draw (0, 0) node[draw, circle, inner sep=1pt](n0){1};
\draw (1, 0) node[draw, circle, inner sep=1pt](n1){4};
\draw (2, 0) node[draw, circle, inner sep=1pt](n2){2};
\draw (0, -1) node[draw, circle, inner sep=1pt](n3){3};
\draw (1, -1) node[draw, rectangle, inner sep=1pt](n4){0};
\draw (2, -1) node[draw, circle, inner sep=1pt](n5){5};
\draw (0, -2) node[draw, circle, inner sep=1pt](n6){6};
\draw (1, -2) node[draw, rectangle, inner sep=1pt](n7){7};
\draw (2, -2) node[draw, circle, inner sep=1pt](n8){8};
\draw[thick, <->] (n0) -- (n1);
\draw (n0) -- (n3);
\draw (n1) -- (n2);
\draw[thick, <->] (n1) -- (n4);
\draw (n2) -- (n5);
\draw (n5) -- (n8);
\draw (n5) -- (n4);
\draw (n8) -- (n7);
\draw (n7) -- (n6);
\draw[thick, dashed] (n7) -- (n4);
\draw (n6) -- (n3);
\draw (n3) -- (n4);
\end{tikzpicture}
\end{tabular}
\caption{Example of permutations explored by Hirata et al. algorithm. In this example, we need to apply a gate on qubits 0 and 7 while constrained on a $3\times 3$ grid architecture. We compute the shortest path between $0$ and $7$, $(0, 1, 4, 7)$ (dashed edges in (a)). We then explore three different permutations, each generated by $k-1 = 2$ inversions. These permutations are depicted in (b), (c), and (d). The dashed edges represent the place where the actual 2-qubit gate will take place. Among these three permutations, we pick the one that leads to the lowest SWAP overhead taking into account the next $w$ entangling gates for some fixed parameter $w$.}
\label{fig:swaps}
\end{figure}
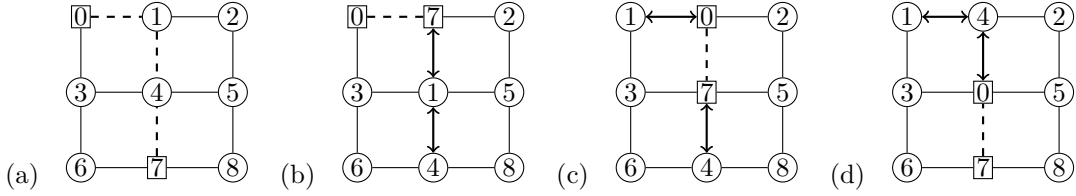

\section{Generalization to routing via lazy linear operator synthesis}\label{sec:cnots}

Now, using the lazy-synthesis framework to describe a SWAP insertion algorithm may seem a bit tedious and unnecessary. In this section, we show how, by extending our classical data structure, we can generalize Hirata et al. approach to outperform it in some settings.

\subsection{Data structures}\label{ssec:data_cnots}
We consider the set of reversible circuits over $n$ qubits comprising only CNOT gates. This set generates the entire set of reversible linear Boolean operators over $n$ variables, and in particular contains the set of all $n$ elements permutations.
This set has a lot of nice properties: it is easy to represent its elements via some $n\times n$ invertible Boolean tables, each row representing an output parity of the circuit \cite{amy2018controlled}. More precisely, given a linear reversible operator $A \in \mathbb{F}_2^{n \times n}$ acting on $n$ qubits at initial values $x=(x_0, x_1, ..., x_{n-1}), x_i \in \{0,1\}$, the logical value of the i-th qubit after execution of $A$ is given by 
\[ \alpha_0 x_0 \oplus \alpha_1x_1 \oplus ... \oplus \alpha_{n-1}x_{n-1}  \] 

where $\alpha = A[i,:]$ is the i-th row of $A$ and $\oplus$ stands for the XOR operation. Therefore we can keep track with a polynomially-sized structured of the action of CNOT circuits on the quantum memory.

Moreover, it is simple to update such tables via some row (resp. column) operations to accommodate for left (resp. right composition) of the operator by a CNOT \cite{patel2008optimal}. More generally, given an initial table $A$ and a linear reversible circuit implementing a table $B$, the updated table is given by $BA$.

\smallskip

\noindent{\bf Lazy linear synthesis}. Our gate sets are defined as follows:
\begin{itemize}
    \item $\mathcal{G}_{in}$ contains any 1-qubit gate and CNOT gates on arbitrary pairs of qubits, thus also including SWAPs,
    \item $\mathcal{G}_{out}$ contains any 1-qubit gate and CNOT gates compatible with some connectivity graph $G$,
    \item finally $S = \{CNOT_{i, j} | i, j \in V(G), i\neq j\}$ is the set of CNOT gates
\end{itemize}
The classical data structure describes reversible linear boolean operators over $n=|V(G)|$ qubits:
\begin{itemize}
    \item $\mathcal{H}$ is the set of invertible $n$ by $n$ boolean matrices,
    \item $\interp{.}$ trivially associates to a linear operator the corresponding $n$-qubit unitary operator,
    \item $u$ updates a table as expected with a matrix/matrix product:
        $$u(A, CNOT_{i, j}) = E_{i, j}.A$$
        where $E_{i, j}$ is the table representation of the operator $CNOT_{i,j}$ given by the identity matrix with one additional $1$ at row $j$, column $i$. In practice, given the simple structure of the $E_{i,j}$ operators, we recover the property that the action of a left-composition by a CNOT is equivalent to a row operation on $A$.
\end{itemize}

Given some incoming 1-qubit gate $g$ acting on qubit $q$ and some linear operator $A$, the behavior of our extraction routine relies on the following two properties: 
\begin{enumerate}
    \item if $A$ has shape: 
        \renewcommand{\kbldelim}{(}
        \renewcommand{\kbrdelim}{)}
        \begin{align}\label{ali:1}
            A = 
                \kbordermatrix{
                &   &       &   & q &    &        &\\
                &   &       &   & 0      &        & \\
                &   &  B' &   & \vdots &   &  B''  \\
                &   &       &   & 0      &        & \\
             q   & 0 &\cdots & 0 & 1      & 0 & \cdots & 0 \\
                &   &       &   & 0      &  &        & \\
                &   &   B'''  &   &\vdots  &   &       B'''' \\
                &&&&0      &       &
            }
        \end{align}
        then $A$ acts as the identity on qubit $q$. Consequently, any 1-qubit gate acting on qubit $q$ can commute with $A$.
    \item For any $B \in \mathbb{F}_2^{n \times n}$ invertible, we have the relation 
    \[ \interp{A} = \interp{ABB^{-1}} = \interp{AB} \cdot \interp{B^{-1}}. \]
    This means that if we add a linear reversible circuit implementing $B^{-1}$ to our current circuit, then to preserve the functionality of our quantum circuit the classical representation of the qubits is updated by $AB$.
\end{enumerate}

One can always find an operator $B$ such that $AB$ has the shape given by Eq.~\eqref{ali:1}. Given such an operator $B$, we have 
\[ \tilde{g} \cdot \interp{A} \underset{\text{property }2}{=} \tilde{g} \cdot \interp{AB} \cdot \interp{B^{-1}} \underset{\text{property }1}{=} \interp{AB} \cdot \tilde{g} \cdot \interp{B^{-1}}. \]

Hence, we define our extraction function $e$ as:
$$ e(A, g) = (AB, c\concat g) $$

where $B$ is such that $AB$ satisfies Eq.~\eqref{ali:1} and $c$ is a $G$-compatible circuit implementing $B^{-1}$.

In fact, we can slightly relax the structure of $AB$ and apply $g$ on a qubit different than qubit $q$. Indeed, considering another qubit $q' \neq q$ and writing $S_{q,q'}$ the Boolean linear operator associated to the swapping operator of qubits $q$ and $q'$, we have
\begin{equation} \tilde{g} \cdot \interp{A} = \interp{AB} \cdot \tilde{g} \cdot \interp{B^{-1}} = \interp{AB} \cdot \tilde{g} \cdot \interp{S_{q,q'}} \cdot \interp{S_{q,q'}} \cdot \interp{B^{-1}} = \interp{A(BS_{q,q'})} \cdot \tilde{g}' \cdot \interp{(BS_{q,q'})^{-1}} \label{swapping} \end{equation}

where $g'$ is the gate $g$ executed on qubit $q'$. In other words, as long as $A$ has shape $(1)$ up to a permutation of the columns, one can still apply gate $g$ on the qubit $q'$ for which $A[:,q'] = e_q$.

Our goal now is to find a suitable operator $B$ such that $c$ is the smallest possible. We provide a heuristic to construct such a circuit.

\subsection{Partial synthesis routine}
In order to simplify the description of our heuristic, we can first remark that the shape $(1)$ that we would like to achieve is stable under taking inverse. That is, finding $B$ such that $AB$ has shape $(1)$ is equivalent to finding $B^{-1}$ such that $B^{-1}A^{-1}$ has shape $(1)$. So instead of working on the columns of $A$ we can work on the rows of $A^{-1}$ and directly compute a quantum circuit for $B^{-1}$. Notably, due to Eq.~\eqref{swapping}, the freedom we have in the choice of the column for reducing $A$ to shape $(1)$ is now a freedom in the choice of the row of $A^{-1}$.

Given some incoming 1-qubit gate acting on qubit $q$, our heuristic works in two stages: 
\begin{itemize}
    \item We start by setting one row of $A^{-1}$ to $e_q^T$. 
    By definition of the inverse, the $q$-th row of $A$ 
    produces a bit vector describing which wire of the circuit should be fold using a fan-in CNOT (i.e. a cascade of CNOT gates that share the same target) onto one of them in order to produce $\{e_q\}$ on $A^{-1}$. By Eq.~\eqref{swapping} we can choose any of the wire $q'$ for which $A[q,q']=1$.
    \item After choosing a suitable qubit $q'$ and updating $A^{-1}$ accordingly, the $q$-th column of the operator can be zeroed by distributing the $q'$-th row 
    onto every row containing a non zero $q$-th component. This can be achieved using a single fan-out CNOT (i.e. a cascade of CNOT gates sharing the same control).
\end{itemize}

Hence, we simply need to be able to produce implementations of fan-in and fan-out CNOT gates that are compliant with our connectivity graph.

To perform this synthesis, we use a relaxed version of the method described in \cite{kissinger2019cnot}. This method relies on the construction of a Steiner tree, or rather an approximation of it. 

Given a connected graph $G$ and a subset of vertices $S \subset V(G)$, a \emph{Steiner tree with terminal vertices}
 $S$ is a subgraph $T$ of $G$ that $(i)$ is a tree, $(ii)$ contains all vertices in $S$, and is minimal (in terms of number of edges).
Given $G$ and $S\subset V(G)$, finding such a minimal tree is an NP-hard problem, even when $G$ is very structured
 (for instance it is hard to solve this problem for grids).
Nevertheless, it is possible to efficiently produce trees that are not too large compared to a Steiner tree.
In practice, we use the very standard algorithm of \cite{takahashi1990approximate}. This algorithm runs in time $O(kn^2)$, where $n$ is the number of vertices in $G$ and $k$ is the number of terminal vertices
, and achieves an approximation ratio of $2(1 - 1/k)$.

The idea behind the partial synthesis is the following:
\begin{itemize}
\item compute $y = e_q^T. A$,  $y = \{y_1, ..., y_k\}$
\item compute an (approximate) Steiner tree of the connectivity graph $G$, with terminal nodes $\{y_1, ..., y_k\}$
\item pick a terminal node $y_i$ and perform algorithm \ref{alg:fan_in}. This routine is a straightforward generalization of the nearest-neighbor implementation of a CNOT gate proposed in \cite{kutin2007computation}  (c.f their Figure 1) that is relaxed to leave intermediate wires in arbitrary states. It acts by pruning leaves of the tree while preserving the invariant that the leaves of the tree must be considered as control qubits for the rest of the fan-in synthesis. All CNOT gates used in the circuit are compliant with the tree's connectivity, making the circuit compliant with the qubits connectivity. Figure \ref{fig:ex_fan_in} gives an example of execution of this routine.
\end{itemize}

\begin{algorithm}[H]
    \caption{Fan-in along a tree}\label{alg:fan_in}
    \SetKwInOut{Input}{input}
    \SetKwInOut{Output}{output}
    \Input{tree $T$, vector $y$, vertex $root$}
    \Output{a circuit $c_{out}$}
    $c_{out} \gets \varepsilon$\;
    \While{$|T| > 1$}{
        $v \gets$ a leaf of $T$ that is not root\;
        $u \gets$ the only neighbor of $v$\;
        \If{$u \notin y$}{
            $c_{out} \gets c_{out}\concat CNOT(u, v)$\;
        }
        $c_{out} \gets c_{out}\concat CNOT(v, u)$\;
        $T.remove(v)$\;
    }
    \Return $c_{out}$\;
\end{algorithm}

Notice that intermediate wires may be left in a different state. Our only goal is to produce the correct parity $e_q$ on the root wire, and we take the liberty of freely changing the state of the intermediate wires.
The resulting circuit contains $2(l-1) -k$ CNOTs where $l$ is the size of the tree and $k$ is the number of terminal vertices (i.e. the Hamming weight of $y$), including the root of the tree.

Fan-outs are synthesized in a similar fashion, except terminal vertices are found by looking at lines of the updated operator $A'$ that have a non-zero $q$th component, and algorithm \ref{alg:fan_out} is used to produce a circuit.

\begin{algorithm}[H]
    \caption{Fan-out along a tree}\label{alg:fan_out}
    \SetKwInOut{Input}{input}
    \SetKwInOut{Output}{output}
    \SetKwComment{Comment}{//}{}
    \Input{tree $T$, vector $y$, vertex $root$}
    \Output{a circuit $c_{out}$}
    $c_{out} \gets \varepsilon$\;
    $Ones \gets y$\;
    $T' \gets$ a copy of $T$\;
    \Comment{Setting all the vertices of T to $1$}
    \While{$|T'| > 0$}{  
        $v \gets $ a leaf of $T'$\;
        $u \gets$ the only neighbor of $v$\;
        \If{$u \notin Ones$}{
            $c_{out} \gets c_{out}\concat CNOT(v, u)$\;
            $Ones.insert(u)$\;
        }
        $T'.remove(v)$\;
    }
    \Comment{Getting rid of all $1$s (except for root)}
    \While{$|T| > 1$}{ 
        $v \gets $ a leaf of $T$ thats not root\;
        $u \gets$ the only neighbor of $v$\;
        $c_{out} \gets c_{out}\concat CNOT(u, v)$\;
         $T.remove(v)$\;
    }
    \Return $c_{out}$\;
\end{algorithm}

This algorithm corresponds exactly to the fill-tree/empty-tree routine of \cite{kissinger2019cnot}, except that we work on the full hardware graph, 
and never have to restrict the structure of the Steiner-tree to a ``descending`` tree. This approach only works because we heavily rely on the fact that we are synthesizing a single row/column and 
thus allow ourselves to leave intermediate wires in arbitrary states.

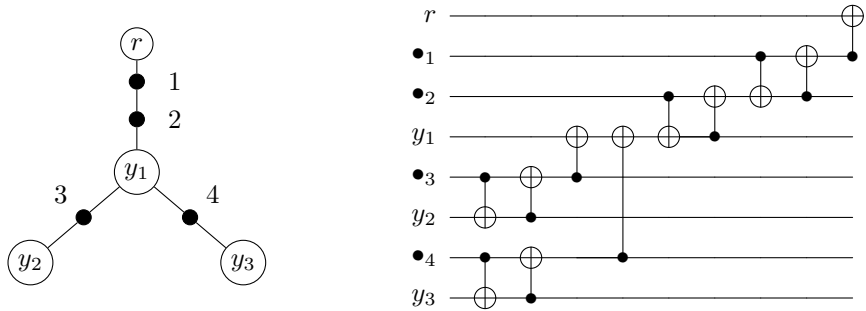
\begin{figure}[h]
\begin{center}
\begin{tikzpicture}
	\node[circle, draw, inner sep=2](root) at (0, 0){$r$};
	\node[circle, draw, inner sep=2, fill=black](r1_1) at  (0, -0.5){};
		\node() at  (0.5, -0.5){$1$};
	\node[circle, draw, inner sep=2, fill=black](r1_2) at  (0, -1){};
		\node() at  (0.5, -1){$2$};
	\node[circle, draw, inner sep=2](r1) at  (0, -1.7){$y_1$};
	\node[circle, draw, inner sep=2, fill=black](12_1) at  (-0.7, -2.3){};
		\node() at  (-1, -2){$3$};
	\node[circle, draw, inner sep=2](r2) at  (-1.4, -2.9){$y_2$};
	\node[circle, draw, inner sep=2, fill=black](13_1) at  (+0.7, -2.3){};
		\node() at  (+1, -2){$4$};
	\node[circle, draw, inner sep=2](r3) at  (+1.4, -2.9){$y_3$};
	\draw (root) -- (r1_1) -- (r1_2) -- (r1) -- (12_1) -- (r2);
	\draw (r1)
	      -- (13_1) -- (r3);	
	\draw (7,-1.5) node {
\Qcircuit @C=0.9em @R=0.7em {
\lstick{r}        &\qw&\qw&\qw&\qw&\qw&\qw&\qw&\qw&\targ&\\
\lstick{\bullet_1}&\qw&\qw&\qw&\qw&\qw&\qw&\ctrl{1}&\targ&\ctrl{-1}&\\
\lstick{\bullet_2}&\qw&\qw&\qw&\qw&\ctrl{1}&\targ&\targ&\ctrl{-1}&\qw&\\
\lstick{y_1}      &\qw&\qw&\targ&\targ&\targ&\ctrl{-1}\qw&\qw&\qw&\qw&\\
\lstick{\bullet_3}&\ctrl{1}& \targ&\ctrl{-1}&\qw&\qw&\qw&\qw&\qw&\qw&\\
\lstick{y_2}      &\targ&\ctrl{-1}&\qw&\qw&\qw&\qw&\qw&\qw&\qw&\\
\lstick{\bullet_4}&\ctrl{1}& \targ&\qw&\ctrl{-3}\qw&\qw&\qw&\qw&\qw&\qw&\\
\lstick{y_3}      &\targ&\ctrl{-1}&\qw&\qw&\qw&\qw&\qw&\qw&\qw&
}
};
\end{tikzpicture}
\end{center}

    \caption{Example of a tree and the corresponding fan-in CNOT circuit generated by algorithm \ref{alg:fan_in}. The terminal vertices are circled. Intermediate vertices are represented as $\bullet$. Notice that this routine can be improved in order to reduce the depth of the fan-in gate. In this work we decided to focus on CNOT count and thus did not insist on these lower level optimizations.}\label{fig:ex_fan_in}
\end{figure}

Both of these routines are quite close to the one used in \cite{kissinger2019cnot}, except that we allow ourselves to be sloppier in the process, and leaving any intermediate qubit in a dirty state, instead of having to preserve invariants when implementing the fan-in/fan-outs.
Appendix \ref{app:example_linear} gives a step by step overview of the execution of this algorithm on a simple circuit.

\subsection{Further optimizations}\label{ssec:ext_cnots}

In practice we improve the algorithm using two independent optimizations.

\smallskip

\noindent{\bf Dealing with phase gates.} It is unnecessary to zero a column of our current linear operator if we just need to insert a phase gate (i.e. a diagonal gate). 
Indeed, since the gate is diagonal, and assuming it is executed on qubit $q$, it is well-known that the gate commutes with any CNOT whose target is not $q$. So the diagonal gate will commute with the subsequent fan-out because one can check that the CNOT gates of the fan-out only use the qubit on which the diagonal gate is executed as a control. Hence, this fan-out can be omitted, thus approximately halving the number of required CNOT gates.

\medskip

\noindent{\bf Recursive search of finite depth.} As mentioned at the end of section \ref{ssec:data_cnots}, we can synthesize our operator $B$ up to some column permutation. This gives us some freedom to perform some optimizations when picking the qubit that will effectively receive the incoming 1-qubit gate.
 To leverage this freedom, we can adopt the same
strategy as in Hirata et al. SWAP insertion algorithm.
In practice, given an incoming gate $g$ acting on qubit $q$, we:
\begin{itemize}
    \item compute the set $y$ of rows of $A^{-1}$ that need to interact in the fan-in CNOT,
    \item generate a Steiner tree with terminal vertices $y$,
    \item branch over all choices of $y_i \in y$ to receive gate $g$
\end{itemize}

Notice that this boils down to trying all possible terminal vertices as root vertices in algorithm \ref{alg:fan_in}. We then perform a recursive search as described in Appendix \ref{sec:rec_search}.

Overall, including a recursive search of depth $w$, the worst case time complexity of our algorithm grows as $O(m n^{3 + w})$ where $m$ is the number of $1$-qubit gates, $n$ the number of qubits in the target architecture. Notice that the runtime is linear in the input circuit's size, but grow exponentially in the depth of the recursive search.

\medskip

\noindent{\bf Dealing with the final operator.} In the general case, the final linear operator in our classical data structure is not trivial. In a general compilation setting, this is not much of an issue, for two reasons:
\begin{itemize}
    \item in the setting where we might have a follow up circuit to compile, one can initialize the linear operator for the next compilation round to the final operator of the previous round,
    \item if we just finished compiling the final portion of our full quantum algorithm, one can always fix the sampled data in order to classically emulate the final linear operator. This operation boils down to inverting a simple linear system over $\mathbb{F}_2$.
\end{itemize}
Moreover, in most NISQ applications, the sampling directive executed at the end of a quantum circuit are here to estimate the expected value of some Hermitian operator $H$. Most of the time, this operator is specified in the Pauli basis. Thus, it is enough to compute a new Hermitian operator $A^{-1}HA$ such that sampling this operator at the end of the compiled circuit is equivalent to sampling the original operator at the end of the input circuit, and this new operator has the same number of terms as the original operator:
$$ \bra{0}C_{in}^\dagger HC_{in}\ket{0} = \bra{0}C_{out}^\dagger \left(A^{-1} H A \right) C_{out}  \ket{0} $$ 

In fact, this property is true for a larger subgroup: the Clifford group, which is tackled in the following section. The fixing procedure for the sampling and observable cases are detailed in Appendix \ref{sec:opt} in the more general case of Clifford operators.

\section{Generalization to routing via lazy synthesis of Clifford operators}\label{sec:cliffords}

We now further extend the previous approaches to lazy synthesis of elements of the Clifford group.

\subsection{Clifford group, Pauli rotations, and tableaux}\label{subsec:clifford}
The Clifford group, $\mathcal{C}_n$, is a natural extension of the class of reversible linear Boolean operators. This group is defined as the largest subgroup of the unitary group that stabilizes the group of Pauli operators $\mathcal{P}_n$:
\begin{align}
 \mathcal{C}_n = \{ U \in U(2^n),\ \forall P\in \mathcal{P}_n,\  U^\dagger P U \in  \mathcal{P}_n \} \label{eq:clifford}
\end{align}

Given a Pauli operator $P\in \mathcal{P}_n$ and a real angle $\theta\in\mathbb{R}$, we define the Pauli rotation $R_P(\theta)$ as:
$$R_P(\theta) = \cos(\theta/2)\mathbb{I} - i \sin(\theta/2)P$$

 The conjugation property \ref{eq:clifford} also applies to Pauli rotations, and not only Pauli operators. Hence, for any Pauli rotation of axis $P\in\mathcal{P}_n$ and any angle $\theta\in \mathbb{R}$, and any $U\in \mathcal{C}_n$:
$$ U^\dagger R_P(\theta)U = R_{U^\dagger P U}(\theta) = R_{P'}(s\cdot\theta)$$
for some Pauli operator $P'$ and some sign $s={\pm 1}$.

This conjugation relation between Clifford operators and Pauli rotations can be exploited as a generalized commutation relation:
$$ R_P(\theta)U = U R_{P'}(s\cdot\theta). $$ 
In fact, this relation can be used to normalize quantum circuits as sequences of non-Clifford Pauli rotations (i.e. Pauli rotations with angles $\neq k\pi/2$), followed by a final Clifford operator (a good example can be found in \cite{Litinski_2019}).

Elements of the Clifford group can be represented efficiently using data structures called \emph{tableaux} that specify how they act by conjugation over generators of the Pauli group \cite{Aaronson_2004, beaudrap2011linearized}. In practice, this means that we can implement a data structure $T$ (a tableau), representing a Clifford operator in $\mathcal{C}$ that:
\begin{itemize}
    \item can be easily updated $T \gets \tilde{g} \cdot T$ or $T \gets T\cdot \tilde{g}$ for some Clifford gate $g$,
    \item can be used to efficiently compute $P \mapsto T P T^\dagger$ for some $n$-qubits Pauli operator $P$, yielding another Pauli operator (and potentially a phase in $\pm 1$).
\end{itemize}
This is a key ingredient to prove the Gottesman-Knill theorem, stating that the execution of quantum circuits over Clifford gates can be efficiently simulated.

In our case, we will assume that we have access to such a data structure, without diving into the implementation details. In practice, we relied on the formulation presented by de Beaudrap in \cite{beaudrap2011linearized} which allows to easily invert tableaux, as well as perform the two operations described above.

Notice that most of this could also be efficiently done by storing a Clifford circuit. In that case, the update operation becomes trivial and the conjugation operation will consist in sequentially conjugating the input Pauli operator by each gate of the Clifford circuit.

\subsection{Lazy synthesis of Clifford operator}

In the following, we define the support of a Pauli operator $P$ as the set of qubits such that $P$ acts non-trivially on them. E.g if $P=I\otimes Z \otimes X \otimes I$, the the support of $P$ is the set $\{1, 2\}$ since $P$ acts as the identity on qubits $0$ and $3$. For ease of notations we will drop the $\otimes$ operators.

\medskip

\noindent{\bf Remark.} In the following subsection, we will use the following simple structure to implement a Pauli rotation $R_P(\theta)$. We can first reduce $P$ to a diagonal operator by conjugating it through a circuit composed of local Clifford gates. This circuit can be built by individually diagonalizing each component of the Pauli operator: 
\begin{itemize}
    \item if the operator acts as $X$ on qubit $i$, insert a $H$ gate on qubit $i$, 
    \item if it acts as $Y$ on qubit $i$, insert a $\sqrt{X} = R_X(\pi/2)$ on qubit $i$.
\end{itemize}

The resulting Pauli operator acts either as $Z$ or $I$ on each qubit. Using the identity $\textrm{CNOT} \cdot ZZ \cdot \textrm{CNOT} = IZ$, one can reduce the support of $P$ to a single qubit via conjugation by a circuit composed of $|P|-1$ CNOT gates sharing the same target qubit $q$. In fine, the resulting Clifford circuit $C$ verifies $R_P(\theta)= C^\dagger R_{Z_q}(\theta) C$. An example is given in Figure \ref{fig:pauli_reduction}. This reduction can be easily extended to take architecture into account by performing a fan-in CNOT along a Steiner tree with the support of the rotation as terminal vertices.

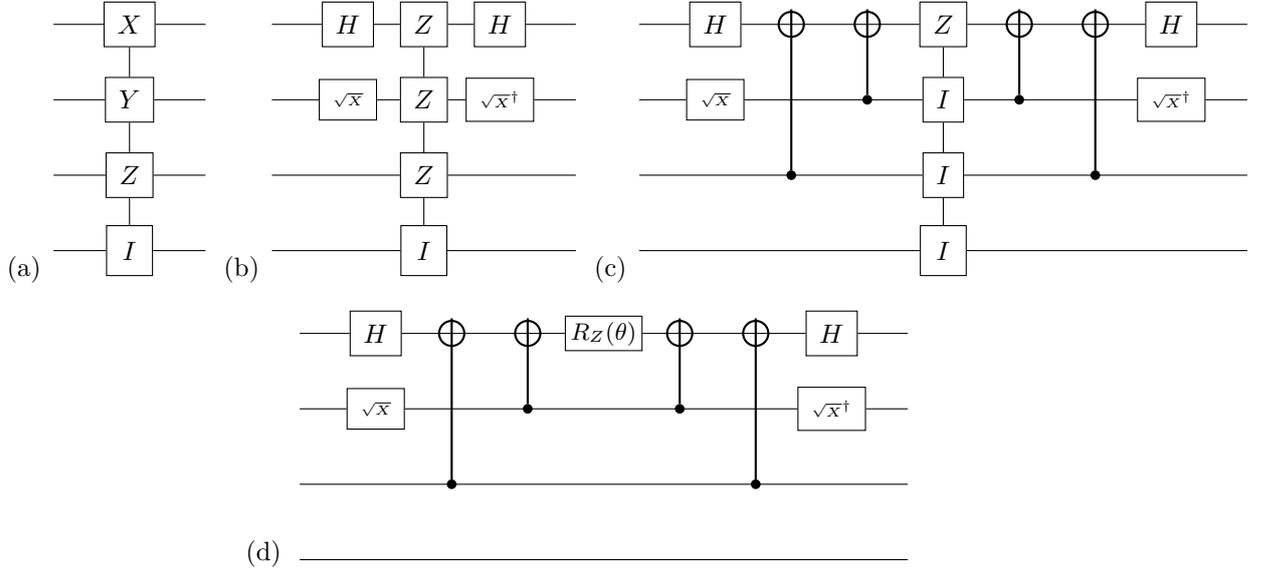
\begin{figure}[h]
    \centering
    
(a)~~\begin{tikzpicture}
\draw (0, 0)  node[rectangle, draw, inner sep=5pt](g1){$X$};
\draw (0, -1) node[rectangle, draw, inner sep=5pt](g2){$Y$};
\draw (0, -2) node[rectangle, draw, inner sep=5pt](g3){$Z$};
\draw (0, -3) node[rectangle, draw, inner sep=6pt](g4){$I$};
\draw (g1) -- (g2) -- (g3) -- (g4);

\draw (1, 0)  -- (g1);
\draw (1, -1)  -- (g2);
\draw (1, -2)  -- (g3);
\draw (1, -3)  -- (g4);

\draw (-1, 0)  -- (g1);
\draw (-1, -1)  -- (g2);
\draw (-1, -2)  -- (g3);
\draw (-1, -3)  -- (g4);
\end{tikzpicture}~~~(b)~~\begin{tikzpicture}
\draw (-1, 0)  node[rectangle, draw, inner sep=5pt](c11){$H$};
\draw (1, 0)  node[rectangle, draw, inner sep=5pt](c12){$H$};
\draw (0, 0)  node[rectangle, draw, inner sep=5pt](g1){$Z$};

\draw (-1, -1)  node[rectangle, draw, inner sep=5pt](c21){\tiny$\sqrt{X}$};
\draw (1, -1)  node[rectangle, draw, inner sep=5pt](c22){\tiny$\sqrt{X}^\dagger$};
\draw (0, -1) node[rectangle, draw, inner sep=5pt](g2){$Z$};

\draw (0, -2) node[rectangle, draw, inner sep=5pt](g3){$Z$};

\draw (0, -3) node[rectangle, draw, inner sep=6pt](g4){$I$};
\draw (g1) -- (g2) -- (g3) -- (g4);
\draw (c11) -- (g1) -- (c12);
\draw (c21) -- (g2) -- (c22);

\draw (2, 0)  -- (c12);
\draw (2, -1)  -- (c22);
\draw (2, -2)  -- (g3);
\draw (2, -3)  -- (g4);

\draw (-2, 0)  -- (c11);
\draw (-2, -1)  -- (c21);
\draw (-2, -2)  -- (g3);
\draw (-2, -3)  -- (g4);
\end{tikzpicture}~~~(c)~~\begin{tikzpicture}
\draw (-1, 0)  node[rectangle, draw, inner sep=5pt](c11){$H$};
\draw (5, 0)  node[rectangle, draw, inner sep=5pt](c12){$H$};

\draw (1, 0) node[circle, draw, thick](t1){};
\draw (1, -1) node[circle, draw, thick, fill=black, inner sep=1pt](c1){};
\draw[thick] (c1) -- (1, 0.2);

\draw (3, 0) node[circle, draw, thick](t1){};
\draw (3, -1) node[circle, draw, thick, fill=black, inner sep=1pt](c1){};
\draw[thick] (c1) -- (3, 0.2);

\draw (0, 0) node[circle, draw, thick](t2){};
\draw (0, -2) node[circle, draw, thick, fill=black, inner sep=1pt](c2){};
\draw[thick] (c2) -- (0, 0.2);

\draw (4, 0) node[circle, draw, thick](t2){};
\draw (4, -2) node[circle, draw, thick, fill=black, inner sep=1pt](c2){};
\draw[thick] (c2) -- (4, 0.2);

\draw (2, 0)  node[rectangle, draw, inner sep=5pt](g1){$Z$};

\draw (-1, -1)  node[rectangle, draw, inner sep=5pt](c21){\tiny$\sqrt{X}$};
\draw (5, -1)  node[rectangle, draw, inner sep=5pt](c22){\tiny$\sqrt{X}^\dagger$};
\draw (2, -1) node[rectangle, draw, inner sep=5pt](g2){$I$};

\draw (2, -2) node[rectangle, draw, inner sep=5pt](g3){$I$};

\draw (2, -3) node[rectangle, draw, inner sep=6pt](g4){$I$};
\draw (g1) -- (g2) -- (g3) -- (g4);
\draw (c11) -- (g1) -- (c12);
\draw (c21) -- (g2) -- (c22);

\draw (6, 0)  -- (c12);
\draw (6, -1)  -- (c22);
\draw (6, -2)  -- (g3);
\draw (6, -3)  -- (g4);

\draw (-2, 0)  -- (c11);
\draw (-2, -1)  -- (c21);
\draw (-2, -2)  -- (g3);
\draw (-2, -3)  -- (g4);
\end{tikzpicture}  
~\\
(d)~~\begin{tikzpicture}
\draw (-1, 0)  node[rectangle, draw, inner sep=5pt](c11){$H$};
\draw (5, 0)  node[rectangle, draw, inner sep=5pt](c12){$H$};

\draw (1, 0) node[circle, draw, thick](t1){};
\draw (1, -1) node[circle, draw, thick, fill=black, inner sep=1pt](c1){};
\draw[thick] (c1) -- (1, 0.2);

\draw (3, 0) node[circle, draw, thick](t1){};
\draw (3, -1) node[circle, draw, thick, fill=black, inner sep=1pt](c1){};
\draw[thick] (c1) -- (3, 0.2);

\draw (0, 0) node[circle, draw, thick](t2){};
\draw (0, -2) node[circle, draw, thick, fill=black, inner sep=1pt](c2){};
\draw[thick] (c2) -- (0, 0.2);

\draw (4, 0) node[circle, draw, thick](t2){};
\draw (4, -2) node[circle, draw, thick, fill=black, inner sep=1pt](c2){};
\draw[thick] (c2) -- (4, 0.2);

\draw (2, 0)  node[rectangle, draw, inner sep=2pt](g1){\small $R_Z(\theta)$};

\draw (-1, -1)  node[rectangle, draw, inner sep=5pt](c21){\tiny$\sqrt{X}$};
\draw (5, -1)  node[rectangle, draw, inner sep=5pt](c22){\tiny$\sqrt{X}^\dagger$};

\draw (c11) -- (g1) -- (c12);
\draw (c21) -- (c22);

\draw (6, 0)  -- (c12);
\draw (6, -1)  -- (c22);
\draw (6, -2)  -- (-2, -2);
\draw (6, -3)  --(-2, -3);

\draw (-2, 0)  -- (c11);
\draw (-2, -1)  -- (c21);

\end{tikzpicture}  
    \caption{Reduction of a Pauli operator/rotation. (a) the initial Pauli operator. (b) after conjugation via local Cliffords, our operator is diagonal. (c) after conjugation with the appropriate CNOT gates, our operator is localized on a single qubit (here, the first qubit). (d) the final quantum circuit implementing $R_{XYZI}(\theta)$.}
    \label{fig:pauli_reduction}
\end{figure}

In that setting we will consider that $\mathcal{G}_{in}$ contains only Clifford gates and arbitrary Pauli rotations, $R_P$ for $P \in \mathcal{P}$. $\mathcal{G}_{out}$ will contain $CNOT, H, R_X(\pi/2)$, and arbitrary $R_Z$ rotations, the CNOTs being restricted to some interaction graph $G$.

In order to use our meta-heuristic, we need to specify our full data structure $\mathcal{D} = \langle \mathcal{T}, \interp{.}, S, u, e \rangle$:
\begin{itemize}
    \item $\mathcal{T}$ is the set of Clifford operators, or, to be precise, of tableaux representing Clifford operators,
    \item $\interp{.}$ is the standard tableau interpretation,
    \item $S$ is the set of Clifford gates,
    \item $u$ is the update of a tableau using a Clifford gate by left composition:
    $$ u(T, g) = \tilde{g} \cdot T$$
\end{itemize}

Our extraction function $e$ acts as follows. Upon encountering a non-Clifford Pauli rotation $R_P(\theta)$:
\begin{itemize}
    \item[(i)] Compute a Pauli operator $P'$ and a phase $s=\pm 1$ such that $s\cdot P' = T^\dagger PT$
    \item[(ii)] For each qbit $i$ in the support of $P'$, if $P'[i]=Y$ then perform a $R_X(\pi/2)$ gate on $i$, and if $P'[i]=X$, perform a $H$ on $i$. This produces a Clifford circuit $c$, comprising only local gates. E.g $P' = IXYZI$, we produce a circuit $c=H_1\concat R_X(\pi/2)_2$.
    \item[(iii)] Pick a target qubit $q$ in the support of $P'$, and perform algorithm \ref{alg:fan_in} in order to generate a fan-in CNOT from all qubits in the support of $P'$ to $q$, thus updating the Clifford sub-circuit $c$
    \item [(iv)] Update $T$ by right composition with $\tilde{c}^\dagger$:
    $$ T' \gets  T \cdot \tilde{c}^\dagger$$
    \item[(v)] Return the updated table $T'$ and sub-circuit $c\concat R_Z(s\cdot \theta)_q$
\end{itemize}

The following proposition about $e$ holds:
\begin{proposition}
Let $T$ be a tableau and $R_P(\theta)$ be a Pauli rotation. If $T', c = e(T, R_P(\theta))$, then $R_P(\theta).\interp{T} = \interp{T'}.\tilde{c}$
\end{proposition}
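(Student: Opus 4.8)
The plan is to prove the soundness identity by tracing the five steps defining $e$ and chaining together three ingredients: the generalized commutation relation between Clifford operators and Pauli rotations, the Pauli-reduction identity of the Remark, and the fact that the tableau interpretation $\interp{\cdot}$ respects right composition by a Clifford circuit. To keep names straight, let $c$ denote the reduction circuit assembled in steps (ii)--(iii), so that the full output circuit is $c_{\mathrm{out}} = c \concat R_Z(s\cdot\theta)_q$; the claim to establish is then $R_P(\theta)\cdot\interp{T} = \interp{T'}\cdot\tilde{c}_{\mathrm{out}}$.

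First I would read off the structural data from steps (iv) and (v). By the concatenation convention, $\tilde{c}_{\mathrm{out}} = R_{Z_q}(s\cdot\theta)\cdot\tilde{c}$. Since $c$ is a Clifford circuit, $\tilde{c}^\dagger$ is Clifford and the right-composition update $T' \gets T\cdot\tilde{c}^\dagger$ produces a legitimate tableau with $\interp{T'} = \interp{T}\cdot\tilde{c}^\dagger$ (the soundness of the update operation, applied gate by gate). Writing $U := \interp{T}$, step (i) guarantees $s\cdot P' = U^\dagger P U$, so the generalized commutation relation $R_P(\theta)U = U\,R_{P'}(s\cdot\theta)$ immediately yields $R_P(\theta)\interp{T} = \interp{T}\,R_{P'}(s\cdot\theta)$.

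Next I would invoke the Remark. Steps (ii)--(iii) build precisely the reduction circuit of the Remark for the axis $P'$: the local $H$ and $R_X(\pi/2)$ gates diagonalize $P'$, and the fan-in CNOTs collapse its support onto the single qubit $q$, i.e. $\tilde{c}\,P'\,\tilde{c}^\dagger = Z_q$. Applying the Remark's identity with angle $s\cdot\theta$ gives $R_{P'}(s\cdot\theta) = \tilde{c}^\dagger\,R_{Z_q}(s\cdot\theta)\,\tilde{c}$. Substituting this and using $\interp{T'} = \interp{T}\tilde{c}^\dagger$ together with $\tilde{c}_{\mathrm{out}} = R_{Z_q}(s\cdot\theta)\tilde{c}$ closes the chain: $R_P(\theta)\interp{T} = \interp{T}\tilde{c}^\dagger\,R_{Z_q}(s\cdot\theta)\,\tilde{c} = \interp{T'}\,\tilde{c}_{\mathrm{out}}$, which is the assertion.

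The hard part, and the only step that is not a mechanical substitution, will be verifying that the reduction circuit of steps (ii)--(iii) sends $P'$ to exactly $+Z_q$ rather than $-Z_q$, so that $\tilde{c}\,P'\,\tilde{c}^\dagger = Z_q$ holds with the correct sign. A spurious sign here would flip the angle of the emitted rotation, making $R_Z(s\cdot\theta)_q$ the wrong gate. This forces one to track the signs generated when conjugating the $X$ and $Y$ components (the choice of $H$ versus $R_X(\pi/2)$ and its adjoint) and when folding the support with the fan-in CNOTs, and to confirm that these signs compose trivially; the phase $s$ extracted in step (i) already accounts for the sign of $P'$ relative to $U^\dagger P U$, so the reduction itself must be sign-neutral. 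Everything else is a direct consequence of the commutation relation, the Remark, and the homomorphism property of $\interp{\cdot}$.
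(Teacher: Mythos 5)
Your proof is correct and takes essentially the same route as the paper's own: commute $R_P(\theta)$ through $\interp{T}$ via the Clifford--Pauli-rotation conjugation relation, rewrite $R_{P'}(s\cdot\theta)$ as $\tilde{c}^\dagger\, R_{Z_q}(s\cdot\theta)\,\tilde{c}$ using the Remark's reduction identity, and regroup $\interp{T}\cdot\tilde{c}^\dagger = \interp{T'}$ to conclude $R_P(\theta)\cdot\interp{T} = \interp{T'}\cdot\tilde{c}_{\mathrm{out}}$, which is exactly the paper's chain of equalities up to renaming ($c$ versus $c_{prep}$). Your closing concern about sign bookkeeping in the reduction circuit is a fair point of rigor that the paper dispatches with ``by construction,'' but it does not alter the structure of the argument.
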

\begin{proof}

By construction, we have that:
$$c = c_{prep}\concat R_{Z}(s\cdot \theta)_{q}$$
with $c_{prep}$ and $q$ such that:
$$ c_{prep}\concat R_{Z}(s\cdot \theta)_{q}\concat c_{prep}^\dagger = R_{P'}(s\cdot\theta) $$
where $s\cdot P' = T^\dagger PT $, and $c_{prep}$ is a Clifford circuit.
This implies that $\tilde{c} = \tilde{c_{prep}}\cdot R_{P'}(s\cdot\theta)$

To be precise, $c_{prep}$ holds the local basis changes and CNOT cascade necessary to the implementation of $R_{P'}(s \cdot \theta)$, plus some stray Clifford operators that might have happened during the ``dirty'' fan-in (corresponding to the dashed box in the example circuit below).
\begin{center}
\hspace{2cm}\Qcircuit @C=0.9em @R=0.7em {
&\qw & \gate{C_0} & \qw        & \qw    &  \qw       & \targ     & \gate{R_Z(s\cdot \theta)}  & \targ     & \qw       & \qw & \qw      & \gate{C_0^\dagger} & \multigate{3}{\ \ \ \interp{T}\ \ \ }\\
&\qw &  \qw       & \qw        &\ctrl{1}&\targ       & \ctrl{-1} & \qw                        & \ctrl{-1} & \targ     & \ctrl{1} & \qw      & \qw & \ghost{\ \ \ \interp{T}\ \ \ }\\
&\qw & \gate{C_2} & \targ      & \targ  & \ctrl{-1}  &  \qw      & \qw                        & \qw       & \ctrl{-1} & \targ & \targ    & \gate{C_2^\dagger} & \ghost{\ \ \ \interp{T}\ \ \ }\\
&\qw & \gate{C_3} & \ctrl{-1}  & \qw    & \qw        & \qw       &  \qw                       & \qw       & \qw       & \qw & \ctrl{-1}& \gate{C_3^\dagger} & \ghost{\ \ \ \interp{T}\ \ \ }  \gategroup{1}{1}{4}{7}{1.7em}{--}
}
\end{center}
Hence, we have:
\begin{align*}
    R_P(\theta)\cdot\interp{T} & = \interp{T}\cdot\interp{T}^\dagger \cdot R_{P}(\theta)\cdot \interp{T}\\
                       & = \interp{T}\cdot\left(\interp{T}^\dagger \cdot R_{P}(\theta)\cdot \interp{T}\right)\\
                       & = \interp{T}\cdot R_{T^\dagger PT}(\theta)\\
                       & = \interp{T}\cdot R_{P'}(s\cdot \theta) &&\text{where $s.P' = T^\dagger P T$}\\
                       & = \interp{T}\cdot \tilde{c_{prep}}^\dagger\cdot R_{Z_q}(s\cdot\theta)\cdot\tilde{c_{prep}}\\
                       & = \interp{T\cdot \tilde{c_{prep}}^\dagger} \cdot \tilde{c} \\
                       & = \interp{T'}\cdot \tilde{c} \\
\end{align*}

\end{proof}

Enventually, our final output circuit will always have shape:
$$C_{out} = C \prod_{i} R_{Z_{q_i}}(\theta_i) F_i L_i $$
where $C$ is some Clifford operator, $R_{Z_{q_i}}(\theta_i)$ are non-Clifford local $Z$ rotations, $F_i$ are architecture compliant fan-in CNOTs as described by algorithm \ref{alg:fan_in}, $L_i$ are local Clifford circuits, and $q_i$ are the target qubits used in the Pauli rotation reductions.
Appendix \ref{app:example_clifford} gives a step by step overview of the execution of this algorithm on a simple circuit.

\subsection{Further optimizations}

\noindent{\bf Recursive search of finite depth.} Notice that, once again, we have some freedom of choice when picking the qubit that will receive the $R_Z$ rotation. In practice, we perform a recursive search of finite depth for the next $w$ rotations to synthesize and pick the host qubit that leads to the least overhead. The branching is very similar to the one described in \ref{ssec:ext_cnots}. After computing the Steiner tree with terminal vertices the support of the rotation we are currently synthesizing, one can choose any terminal vertex to be the target of our fan-in. Once again we refer to Appendix \ref{sec:rec_search} for more details.
The overall worst case complexity is the same as the CNOT case. Indeed, the complexity is dominated by the recursive exploration of a search tree where each vertex exploration requires the generation of a Steiner tree of the architecture graph.

\medskip

\noindent{\bf Dealing with the final Clifford operator.} Once again, we are left with a possibly non-trivial final Clifford operator $C$. As stated in the previous section, if one has to compile several pieces of circuits in sequence, one can always initialize the Clifford operator of the next compilation round using $C$.
In the general case where we are done compiling and need to effectively deal with this operator, we can almost always avoid having to synthesize the full operator $C$. Section \ref{sec:opt} describes how to do so when sampling an observable or sampling bit-strings in the computational basis.

\medskip

\noindent{\bf Rotation merging.} As mentioned in subsection \ref{subsec:clifford}, any quantum circuit can be reformulated as a sequence of Pauli rotations with non-Clifford angles (i.e. angles $\neq k\frac{\pi}{2}$). That is:
$$C\prod_i R_{P_i}(\theta_i)$$
where $R_{P_i}(\theta_i)$ are Pauli rotations and $C$ is a final Clifford operator. Moreover, this form can be efficiently computed by pulling all the Clifford gates at the end of the circuit.
Once such a product is obtained, one can try to merge rotations with identical axis. This can also be done efficiently by considering each rotation one by one and checking if it can be commuted and merged with a rotation with an identical axis. This optimization corresponds exactly the the T-count reduction algorithm presented in \cite{zhang2019optimizing}. This routine is described in Algorithm \ref{alg:rotation_merging}. Notice that this is not the only way to produce a final ordering of the rotations. In particular when we insert the un-merged rotation in list $L$ (line 14), one would make a different choice and insert sooner in the list. By inserting it at the end of the list, we might block some other merges by preventing the next rotations to commute past it. In order to keep this optimization lightweight and reproducible, we keep things simple and insert the rotation at the end of the list. 

This optimization has several consequences. First, by merging rotations, we reduce the number of calls to the partial synthesis routine. Moreover, by merging rotations, one might end up with a rotation with Clifford angle. Such a rotation can then be pulled and the end of the circuit, effectively removing it from the sequence of rotations to synthesize. This optimization is a key feature when dealing with Clifford + T circuits where this type of situation occurs regularly.
This pre-processing has a worst case time complexity of $O(m^2n)$ where $m$ is the number of non-Clifford Pauli rotations and $n$ is the number of qubits.

\begin{algorithm}[h]
    \caption{Rotation merging}\label{alg:rotation_merging}
    \SetKwInOut{Input}{input}
    \SetKwInOut{Output}{output}
    \SetKwComment{Comment}{//}{}
    \Input{a rotation sequence $S$}
    \Output{a rotation sequence $L$}

     $L \gets [~]$\;
    \For{$R_P(\theta)$ in $S$}{
        \For{$R_{P'}(\theta')$ in $\textrm{reversed}(L)$}{
            \If{$P = P'$}{
                 $\theta' \gets \theta' + \theta$\;
                 break\;
            }
            \If{$P$ and $P'$ do not commute}{
                 break\;
            }
        }
        \If{$P$ was not inserted}{
             $L \gets L::R_P(\theta)$\;
        }
    }
    \Return $L$
\end{algorithm}

\medskip

\noindent{\bf Rotation reordering.} Another optimization that can easily be computed is the reordering of consecutive commuting rotations. Given a sequence of Pauli rotations $\prod R_{P_i}(\theta_i)$, one can rewrite it as $\prod_G \prod_{i\in G} R_{P_i}(\theta_i)$ where $G$ are groups of commuting rotations. Notice that in this expression, while the first product is ordered, the second is not. This gives us a leverage for optimization. In practice, we use a greedy approach consisting in synthesizing the less costly rotation first. That is, we compute all the Steiner trees necessary to implement all the rotations in a given group and start with the rotation that requires the smallest tree.
Groups of commuting rotations are computed greedily using Algorithm \ref{alg:group_rotations}. Notice that this is not the only way to produce such a sequence. In practice, trying harder to form larger groups of commuting rotation did not seem to improve the benchmark results, hence the rather simple greedy heuristic. 
This pre-processing has a worst case time complexity of $O(m^2n)$ where $m$ is the number of non-Clifford Pauli rotations and $n$ is the number of qubits.

\begin{algorithm}[H]
    \caption{Rotation grouping}\label{alg:group_rotations}
    \SetKwInOut{Input}{input}
    \SetKwInOut{Output}{output}
    \SetKwComment{Comment}{//}{}
    \Input{a rotation sequence $S$}
    \Output{a sequence of rotation groups $L$}
     $L \gets [~]$\;
     $G \gets \{\}$\;
    \For{$R_P(\theta)$ in $S$}{
        \If{$R_P$ commutes with all rotations in $G$}{
             $G.insert(R_P(\theta))$\;
             {\bf continue}\;
        }
        $L \gets L::G$\;
        $G \gets \{R_P(\theta)\}$\;
    }
    $L \gets L::G$\;
    \Return $L$\;
\end{algorithm}

\section{Benchmarks} \label{sec:benchmarks}

In order to benchmark our method we picked three representative architectures: Rigetti's Aspen chip (16 qubits), IBM's Melbourne chip (14 qubits), and a fictive all-to-all (14 qubits) architecture. The idea being that Melbourne's connectivity is close to a grid, whereas Aspen's connectivity contains longer cycles and has a less regular structure. The all-to-all architecture is here to act as a baseline in the benchmarks. The connectivity graphs are described in Figure \ref{fig:architectures}.

\begin{figure}[h]
    \centering
    (a)~~\begin{tikzpicture}
        \draw (0,0) node[circle, draw, inner sep=2pt](n0){$0$};
        \draw (1,0) node[circle, draw, inner sep=2pt](n1){$1$};
        \draw (2,0) node[circle, draw, inner sep=2pt](n2){$2$};
        \draw (3,0) node[circle, draw, inner sep=2pt](n3){$3$};
        \draw (4,0) node[circle, draw, inner sep=2pt](n4){$4$};
        \draw (5,0) node[circle, draw, inner sep=2pt](n5){$5$};
        \draw (6,0) node[circle, draw, inner sep=2pt](n6){$6$};
        
        \draw (1,-1) node[circle, draw, inner sep=1pt](n13){$13$};
        \draw (2,-1) node[circle, draw, inner sep=1pt](n12){$12$};
        \draw (3,-1) node[circle, draw, inner sep=1pt](n11){$11$};
        \draw (4,-1) node[circle, draw, inner sep=1pt](n10){$10$};
        \draw (5,-1) node[circle, draw, inner sep=2pt](n9){$9$};
        \draw (6,-1) node[circle, draw, inner sep=2pt](n8){$8$};
        \draw (7,-1) node[circle, draw, inner sep=2pt](n7){$7$};
        \draw (n0) -- (n1) -- (n2)-- (n3)-- (n4)-- (n5)-- (n6)-- (n8)-- (n9)-- (n10)-- (n11)-- (n12)-- (n13)-- (n1);
        \draw (n2) -- (n12);
        \draw (n3) -- (n11);
        \draw (n4) -- (n10);
        \draw (n5) -- (n9);
        \draw (n8) -- (n7);
    \end{tikzpicture}
    ~~~~(b)~~\begin{tikzpicture}
        \draw (-22.5 - 45:1cm) node[circle, draw, inner sep=2pt](n0){$0$};
        \draw (-22.5 :1cm) node[circle, draw, inner 
        sep=2pt](n1){$1$};
        \draw (-22.5 + 45:1cm) node[circle, draw, inner sep=2pt](n2){$2$};
        \draw (-22.5 + 90:1cm) node[circle, draw, inner sep=2pt](n3){$3$};
        \draw (-22.5 + 45 + 90:1cm) node[circle, draw, inner sep=2pt](n4){$4$};
        \draw (-22.5 + 180:1cm) node[circle, draw, inner sep=2pt](n5){$5$};
        \draw (-22.5 + 180 + 45:1cm) node[circle, draw, inner sep=2pt](n6){$6$};
        \draw (-22.5 + 270:1cm) node[circle, draw, inner sep=2pt](n7){$7$};
        
        \draw (3, 0) + (-22.5 - 45:1cm) node[circle, draw, inner sep=2pt](n8){$8$};
        \draw (3, 0) + (-22.5 :1cm) node[circle, draw, inner 
        sep=2pt](n9){$9$};
        \draw (3, 0) + (-22.5 + 45:1cm) node[circle, draw, inner sep=1pt](n10){$10$};
        \draw (3, 0) + (-22.5 + 90:1cm) node[circle, draw, inner sep=1pt](n11){$11$};
        \draw (3, 0) + (-22.5 + 45 + 90:1cm) node[circle, draw, inner sep=1pt](n12){$12$};
        \draw (3, 0) + (-22.5 + 180:1cm) node[circle, draw, inner sep=1pt](n13){$13$};
        \draw (3, 0) + (-22.5 + 180 + 45:1cm) node[circle, draw, inner sep=1pt](n14){$14$};
        \draw (3, 0) + (-22.5 + 270:1cm) node[circle, draw, inner sep=1pt](n15){$15$};
        \draw (n0) -- (n1)-- (n2)-- (n3)-- (n4)-- (n5)-- (n6)-- (n7) -- (n0);
        \draw (n8) -- (n9)-- (n10)-- (n11)-- (n12)-- (n13)-- (n14)-- (n15) -- (n8);
        \draw (n2) -- (n13);
        \draw (n1) -- (n14);
    \end{tikzpicture}
    \caption{(a) IBM's Melbourne and (b) Rigetti's Aspen connectivity graph}
    \label{fig:architectures}
\end{figure}
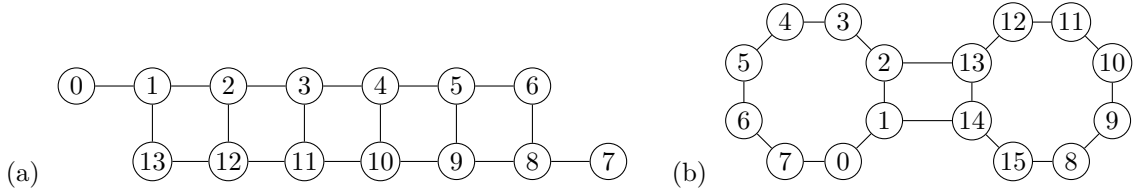

We conducted two sets of benchmarks whose results are reported in the Subsections \ref{ssec:full_benchs} 
and \ref{ssec:scalable_benchs}.

\subsection{Benchmarks content}\label{ssec:full_benchs}

We benchmarked\footnote{Benchmark data can be found at \url{https://github.com/smartiel/lazy_synthesis_benchmarks} together with some compiled circuits.} eight different algorithms: 
\begin{itemize}
    \item Hirata \emph{et al} SWAP insertion algorithm (generalized to arbitrary connectivity, search depth of $4$), denoted \emph{swap} in the various benchmarks (algorithm of Section \ref{sec:swaps}),
    \item lazy synthesis using linear boolean operators (depth of $3$), denoted \emph{linear} in the benchmarks (algorithm of Section \ref{sec:cnots}),
    \item lazy synthesis using Clifford operators (depth of $3$), denoted \emph{clifford} in the benchmarks  (algorithm of Section \ref{sec:cliffords}). This one comes in $4$ variants: with and without rotation reordering ($\star$ marker) and with and without rotation merging ($\dagger$ marker).
    \item Cowtan et al. swap insertion algorithm (see \cite{cowtan2019qubit}), denoted \emph{cowtan} in the benchmarks,
    \item Staq compiler \cite{Amy_2020} without initial mapping optimization and with rotation merging to reduce T-count (i.e. exactly the rotation merging pre-processing described in algorithm \ref{alg:rotation_merging}), denoted \emph{staq} in the benchmarks,
\end{itemize} 

We ran these algorithms on various sets of quantum circuits:
\begin{itemize}
    \item A collection of standard circuits taken from \cite{amy2018controlled} that fit on $14$ qubits. Circuits are simply pre-processed by replacing Toffoli gates by a standard CNOT + T decomposition. Tables \ref{fig:table_arithmetic_melbourne}, \ref{fig:table_arithmetic_aspen}, \ref{fig:table_arithmetic_ata} provides the relative CNOT overhead for the three hardware models.
    \item A set of random QAOA instances of MAX-k-LIN-2 (depth 1). These circuits are basically phase polynomials where parities have uniform Hamming weights equal to $k$. The input circuits are generated using a naive strategy and produce a large amount of CNOTs (each parity is implemented via two cascades of CNOTs and a $R_Z$ rotation). Their Clifford density roughly grows as $\frac{2(k-1)}{2(k-1) + 1}$ (neglecting the final layer of non-Clifford $X$ rotations and the initial Walsh-Hadamard transform).
    \item A set of random products of arbitrary Pauli rotations. These present roughly the same statistical features as standard quantum chemistry/material Ansätze. These circuits usually exhibit quite large Clifford densities ($>0.9$).
\end{itemize}

 For the last two benchmarks (QAOA and Pauli rotations), the rotation merging optimization is pointless since we make sure that all rotations have distinct axis in the input circuit. Thus, for these benchmarks we only display the performances of the $\star$ version of the \emph{clifford} algorithm.
 
 \smallskip
 
\noindent{\bf Why no other SWAP insertion algorithms?} We also tried to include other SWAP insertion algorithms (namely SABRE \cite{sabre}, and $A^*$ based approach \cite{bka}), but both of these methods performed systematically worse than Hirata \emph{et al} approach generalized to arbitrary connectivity (our algorithm \emph{swap} described in section \ref{sec:swaps}). Moreover, the execution time of the $A^*$ approach can sometimes become prohibitive, which makes it unpractical even for small architectures. Notice also that \cite{cowtan2019qubit} shares a lot of similarities with SABRE.

\begin{table}[h]
    \centering
    \caption{Compilation of a collection of standard circuits for Melbourne architecture.}
    \label{fig:table_arithmetic_melbourne}
    \begin{center}
    \scalebox{0.85}{
\begin{tabular}{|c|c|c|c|c|c|c|c|c|c|}
\hline
circuit & init & swap & linear & clifford & clifford$\star$ & clifford$\dagger$ & clifford$\star\dagger$ & cowtan & staq\\
\hline
tof\_3 & 18 & 116.7\% & 150.0\% & 138.9\% & 77.8\% & 127.8\% & \textcolor{green}{72.2\%} & 133.3\% & 77.8\%\\
barenco\_tof\_3 & 24 & 75.0\% & 66.7\% & 66.7\% & 50.0\% & 25.0\% & \textcolor{green}{-4.2\%} & 100.0\% & 45.8\%\\
mod5\_4 & 28 & 117.9\% & 60.7\% & 25.0\% & -3.6\% & 0.0\% & \textcolor{green}{-21.4\%} & 171.4\% & 117.9\%\\
tof\_4 & 30 & 110.0\% & 150.0\% & 120.0\% & 103.3\% & 116.7\% & \textcolor{green}{83.3\%} & 160.0\% & 100.0\%\\
tof\_5 & 42 & 135.7\% & 276.2\% & 226.2\% & 214.3\% & 157.1\% & 109.5\% & 150.0\% & \textcolor{green}{54.8\%}\\
qft\_4 & 46 & 176.1\% & 60.9\% & 28.3\% & 19.6\% & \textcolor{green}{-23.9\%} & -19.6\% & 117.4\% & 56.5\%\\
barenco\_tof\_4 & 48 & 112.5\% & 170.8\% & 87.5\% & 87.5\% & 12.5\% & \textcolor{green}{0.0\%} & 150.0\% & 60.4\%\\
mod\_mult\_55 & 48 & 337.5\% & 345.8\% & 220.8\% & 181.2\% & 172.9\% & \textcolor{green}{168.8\%} & 193.8\% & 306.2\%\\
vbe\_adder\_3 & 70 & 107.1\% & 60.0\% & 38.6\% & 11.4\% & \textcolor{green}{-32.9\%} & -17.1\% & 120.0\% & 135.7\%\\
barenco\_tof\_5 & 72 & 112.5\% & 245.8\% & 119.4\% & 127.8\% & 41.7\% & \textcolor{green}{20.8\%} & 137.5\% & 59.7\%\\
rc\_adder\_6 & 93 & 180.6\% & 76.3\% & 31.2\% & 31.2\% & -7.5\% & \textcolor{green}{-10.8\%} & 112.9\% & 221.5\%\\
gf2\^~4\_mult & 99 & 184.8\% & 278.8\% & 205.1\% & 93.9\% & 180.8\% & \textcolor{green}{84.8\%} & 197.0\% & 381.8\%\\
mod\_red\_21 & 105 & 165.7\% & 204.8\% & 116.2\% & 105.7\% & 79.0\% & \textcolor{green}{58.1\%} & 171.4\% & 210.5\%\\
hwb6 & 116 & 196.6\% & 169.0\% & 91.4\% & 64.7\% & 67.2\% & \textcolor{green}{52.6\%} & 152.6\% & 205.2\%\\
grover\_5 & 288 & 116.7\% & 210.4\% & 245.1\% & 166.3\% & 194.4\% & \textcolor{green}{91.7\%} & 121.9\% & 92.0\%\\
hwb8 & 7129 & 224.2\% & 168.5\% & 169.7\% & 156.6\% & 134.4\% & \textcolor{green}{114.1\%} & 183.6\% & 280.5\%\\
\hline
\end{tabular}
    }
    \end{center}
    
\end{table}

\begin{table}[h]
    \centering
    \caption{Compilation of a collection of standard circuits for Aspen architecture.}
    \label{fig:table_arithmetic_aspen}
    \begin{center}
    \scalebox{0.85}{
    \begin{tabular}{|c|c|c|c|c|c|c|c|c|c|}
\hline
circuit & init & swap & linear & clifford & clifford$\star$ & clifford$\dagger$ & clifford$\star\dagger$ & cowtan & staq\\
\hline
tof\_3 & 18 & 116.7\% & 161.1\% & 127.8\% & 77.8\% & \textcolor{green}{72.2\%} & \textcolor{green}{72.2\%} & 150.0\% & 83.3\%\\
barenco\_tof\_3 & 24 & 75.0\% & 66.7\% & 66.7\% & 50.0\% & 25.0\% & \textcolor{green}{-4.2\%} & 137.5\% & 58.3\%\\
mod5\_4 & 28 & 117.9\% & 60.7\% & 25.0\% & -3.6\% & 0.0\% & \textcolor{green}{-21.4\%} & 160.7\% & 117.9\%\\
tof\_4 & 30 & 110.0\% & 223.3\% & 130.0\% & 103.3\% & 106.7\% & \textcolor{green}{70.0\%} & 120.0\% & 76.7\%\\
tof\_5 & 42 & 164.3\% & 238.1\% & 126.2\% & 166.7\% & \textcolor{green}{111.9\%} & 164.3\% & 242.9\% & 216.7\%\\
qft\_4 & 46 & 176.1\% & 60.9\% & 28.3\% & 19.6\% & \textcolor{green}{-23.9\%} & -19.6\% & 123.9\% & 41.3\%\\
barenco\_tof\_4 & 48 & 112.5\% & 179.2\% & 106.2\% & 56.2\% & 12.5\% & \textcolor{green}{0.0\%} & 118.8\% & 85.4\%\\
mod\_mult\_55 & 48 & 181.2\% & 300.0\% & 133.3\% & 120.8\% & 104.2\% & \textcolor{green}{100.0\%} & 187.5\% & 210.4\%\\
vbe\_adder\_3 & 70 & 145.7\% & 94.3\% & 81.4\% & 77.1\% & 27.1\% & \textcolor{green}{22.9\%} & 171.4\% & 35.7\%\\
barenco\_tof\_5 & 72 & 125.0\% & 220.8\% & 229.2\% & 166.7\% & \textcolor{green}{55.6\%} & 56.9\% & 133.3\% & 273.6\%\\
rc\_adder\_6 & 93 & 190.3\% & 221.5\% & 107.5\% & \textcolor{green}{72.0\%} & 82.8\% & 82.8\% & 254.8\% & 255.9\%\\
gf2\^~4\_mult & 99 & 257.6\% & 382.8\% & 312.1\% & \textcolor{green}{151.5\%} & 270.7\% & 191.9\% & 242.4\% & 470.7\%\\
mod\_red\_21 & 105 & 162.9\% & 289.5\% & 196.2\% & 126.7\% & 181.0\% & \textcolor{green}{123.8\%} & 188.6\% & 198.1\%\\
hwb6 & 116 & 178.4\% & 160.3\% & 81.0\% & 53.4\% & 59.5\% & \textcolor{green}{41.4\%} & 155.2\% & 261.2\%\\
grover\_5 & 288 & 129.2\% & 209.7\% & 222.2\% & 149.0\% & 170.5\% & \textcolor{green}{97.9\%} & 146.9\% & 155.2\%\\
hwb8 & 7129 & 199.0\% & 191.4\% & 180.8\% & 205.6\% & 137.1\% & \textcolor{green}{118.8\%} & 227.2\% & 289.7\%\\
\hline
\end{tabular}}
    \end{center}
    
\end{table}

\begin{table}[h]
    \centering
    \caption{Compilation of a collection of standard circuits for all-to-all architecture.}
    \label{fig:table_arithmetic_ata}
    \begin{center}
    \scalebox{0.85}{\begin{tabular}{|c|c|c|c|c|c|c|c|}
\hline
circuit & init & linear & clifford & clifford$\star$ & clifford$\dagger$ & clifford$\star\dagger$ & staq\\
\hline
tof\_3 & 18 & 0.0\% & -5.6\% & -27.8\% & -22.2\% & \textcolor{green}{-38.9\%} & 0.0\%\\
barenco\_tof\_3 & 24 & 0.0\% & 0.0\% & -29.2\% & -33.3\% & \textcolor{green}{-50.0\%} & -8.3\%\\
mod5\_4 & 28 & 10.7\% & -17.9\% & -42.9\% & -39.3\% & \textcolor{green}{-50.0\%} & -3.6\%\\
tof\_4 & 30 & 0.0\% & 0.0\% & -16.7\% & -13.3\% & \textcolor{green}{-30.0\%} & -3.3\%\\
tof\_5 & 42 & 4.8\% & 0.0\% & -16.7\% & -11.9\% & \textcolor{green}{-28.6\%} & -4.8\%\\
qft\_4 & 46 & -17.4\% & -37.0\% & -47.8\% & -54.3\% & \textcolor{green}{-60.9\%} & -28.3\%\\
barenco\_tof\_4 & 48 & 0.0\% & 4.2\% & -18.8\% & \textcolor{green}{-31.2\%} & \textcolor{green}{-31.2\%} & -14.6\%\\
mod\_mult\_55 & 48 & 4.2\% & -2.1\% & -14.6\% & -18.8\% & \textcolor{green}{-22.9\%} & 16.7\%\\
vbe\_adder\_3 & 70 & -4.3\% & -12.9\% & -40.0\% & -48.6\% & \textcolor{green}{-61.4\%} & -18.6\%\\
barenco\_tof\_5 & 72 & 2.8\% & 12.5\% & -6.9\% & -25.0\% & \textcolor{green}{-37.5\%} & -16.7\%\\
rc\_adder\_6 & 93 & 2.2\% & 10.8\% & 10.8\% & -1.1\% & \textcolor{green}{-12.9\%} & 2.2\%\\
gf2\^~4\_mult & 99 & 43.4\% & 56.6\% & \textcolor{green}{-9.1\%} & 27.3\% & 1.0\% & 62.6\%\\
mod\_red\_21 & 105 & 8.6\% & 53.3\% & 24.8\% & 26.7\% & \textcolor{green}{4.8\%} & 12.4\%\\
hwb6 & 116 & 4.3\% & 13.8\% & -7.8\% & -5.2\% & \textcolor{green}{-18.1\%} & 0.0\%\\
grover\_5 & 288 & 0.0\% & 24.7\% & 39.6\% & 5.6\% & 26.4\% & \textcolor{green}{-4.5\%}\\
hwb8 & 7129 & 0.9\% & 69.3\% & 45.3\% & 36.1\% & 21.6\% & \textcolor{green}{-6.6\%}\\
\hline
\end{tabular}}
    \end{center}
    
\end{table}

\subsection{Discussion}

\noindent{\bf Standard circuits.} The Clifford based approach outperforms almost systematically the other approaches in the case of limited connectivity. Interestingly, our \emph{swap} algorithm outperforms the state-of-the-art \emph{cowtan} method of \cite{cowtan2019qubit} $24$ out of $32$ times. The \emph{staq} method can sometimes outperform our methods by a decent margin. Notice also that the most optimized version of the \emph{clifford} method is not necessarily the best method for some circuits. 

\medskip

\noindent{\bf MAX-k-LIN-2 and random Pauli sequences.} For both of these benchmarks the input circuits have a quite high Clifford density, since they are based on an initial naive implementation of a sequence of Pauli rotations. It is interesting to notice that since the MAX-k-LIN-2 circuits roughly correspond to phase polynomials, the \emph{linear} and \emph{clifford}${\star\dagger}$ approaches have very comparable behaviors. The \emph{clifford}${\star\dagger}$ approach, however, can benefit from the rotation reordering optimization. With this optimization, it becomes overwhelmingly better than the \emph{swap} or \emph{linear} approach. Notice that the first point(s) of the graph (Hamming weight of 2) corresponds exactly to MAX-CUT QAOA circuits which are often taken as standard circuits for NISQ era applications. On these circuits, the \emph{clifford}${\star\dagger}$ presents a two fold improvement compared to the SWAP insertion approaches. Notice that, even though the \emph{linear} approach behaves better than the Staq compiler for limited connectivity, it falls behind for all-to-all connectivity.
In the random Pauli setting, the \emph{clifford} approach, by itself, systematically beats the other algorithms. The rotation reordering optimization does not bring significant improvements compared to the standard \emph{clifford} approach. For this class of circuits, the state-of-the-art methods lags behind both the \emph{linear} and \emph{clifford} approaches when considering limited connectivity.

\begin{figure}[h]
    \centering
    \includegraphics[scale=0.6]{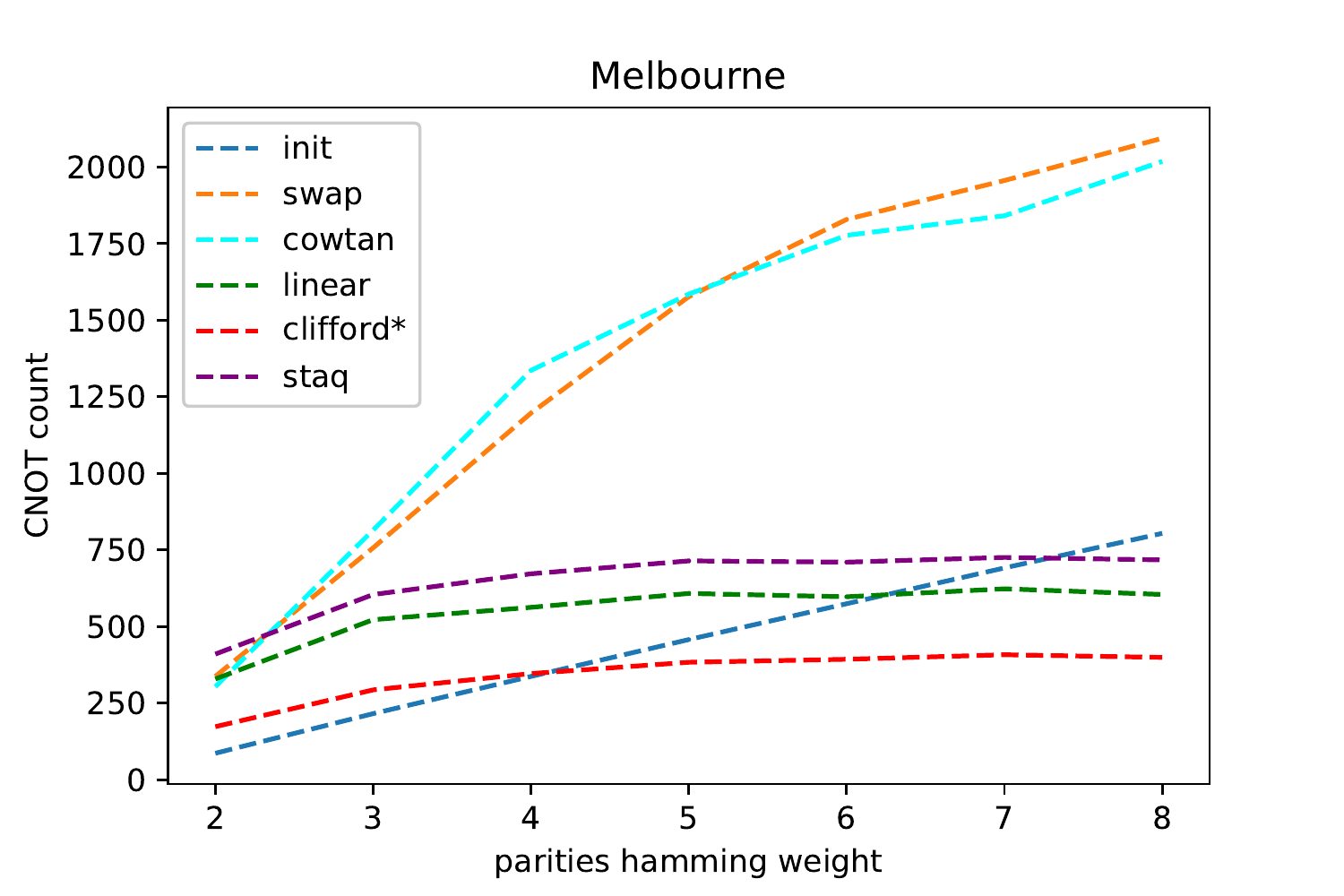}%
    \includegraphics[scale=0.6]{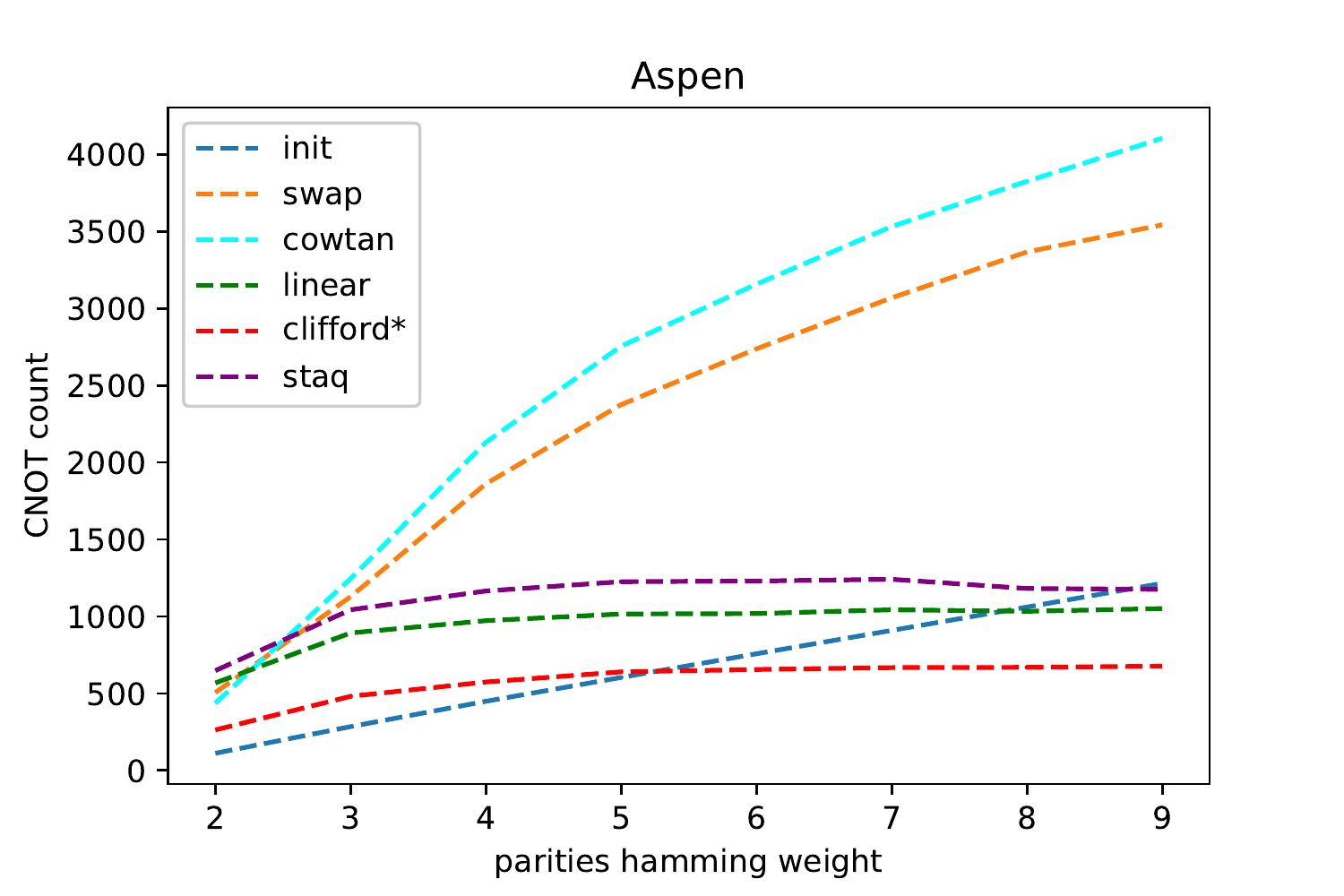}
    \includegraphics[scale=0.6]{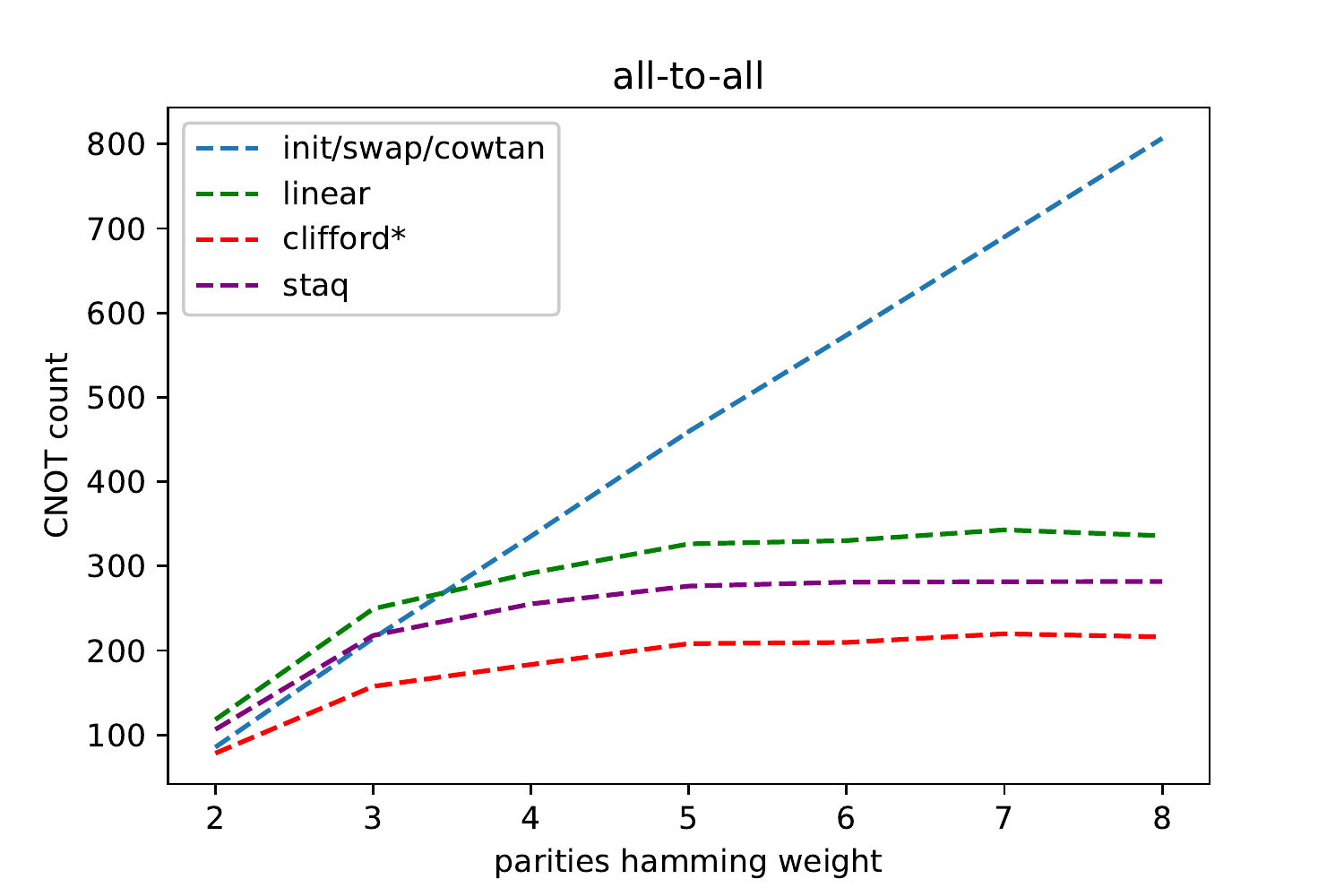}
    \caption{QAOA circuits for MAX-K-LIN-2. Each circuit is generated by picking $n^2$ random parities, without repetitions, of fixed hamming weight (between 2 and $n/2$). Each point is generated using 30 random instances.}
    \label{graph:qaoa}
\end{figure}

\begin{figure}[h]
    \centering
    \includegraphics[scale=0.6]{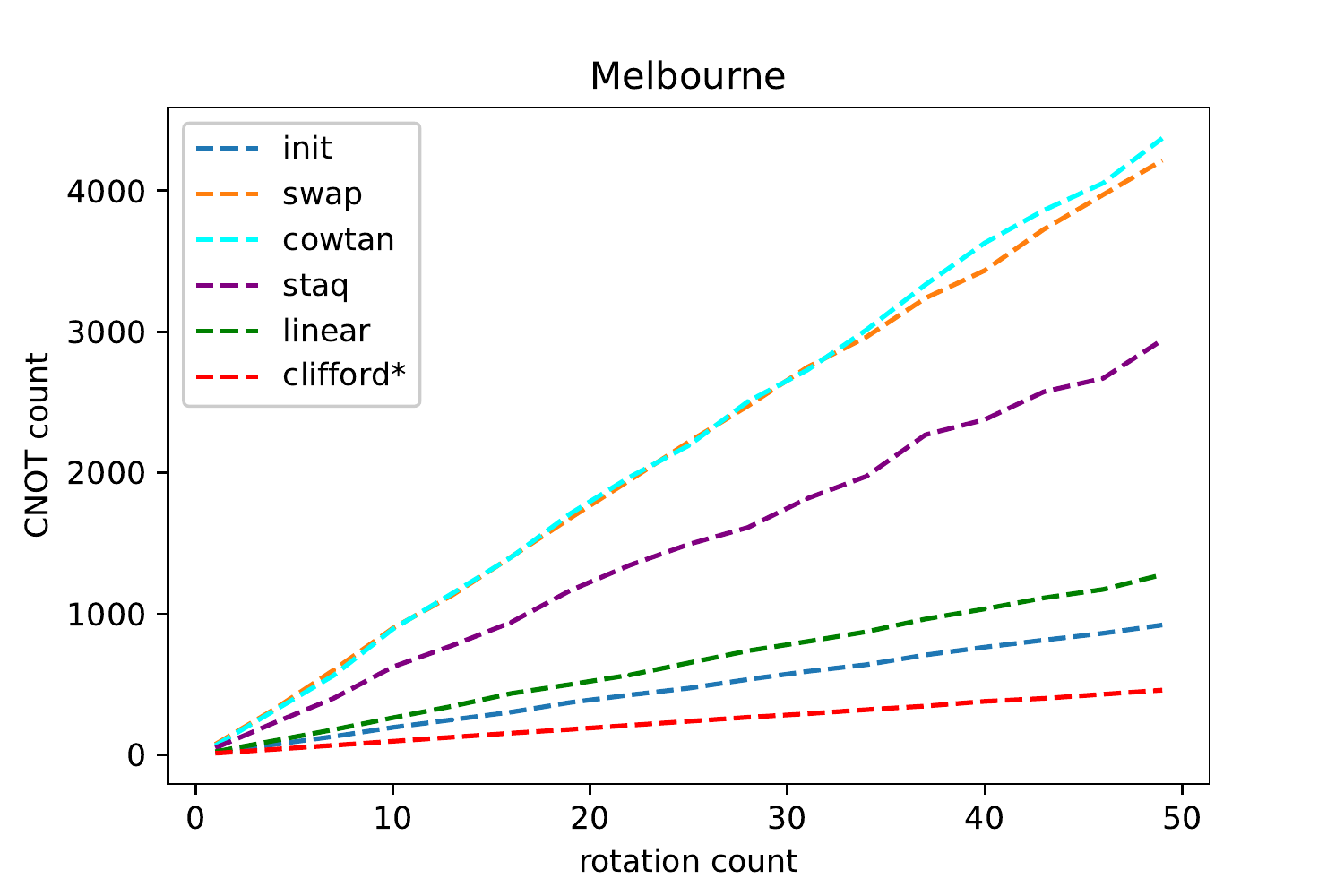}%
    \includegraphics[scale=0.6]{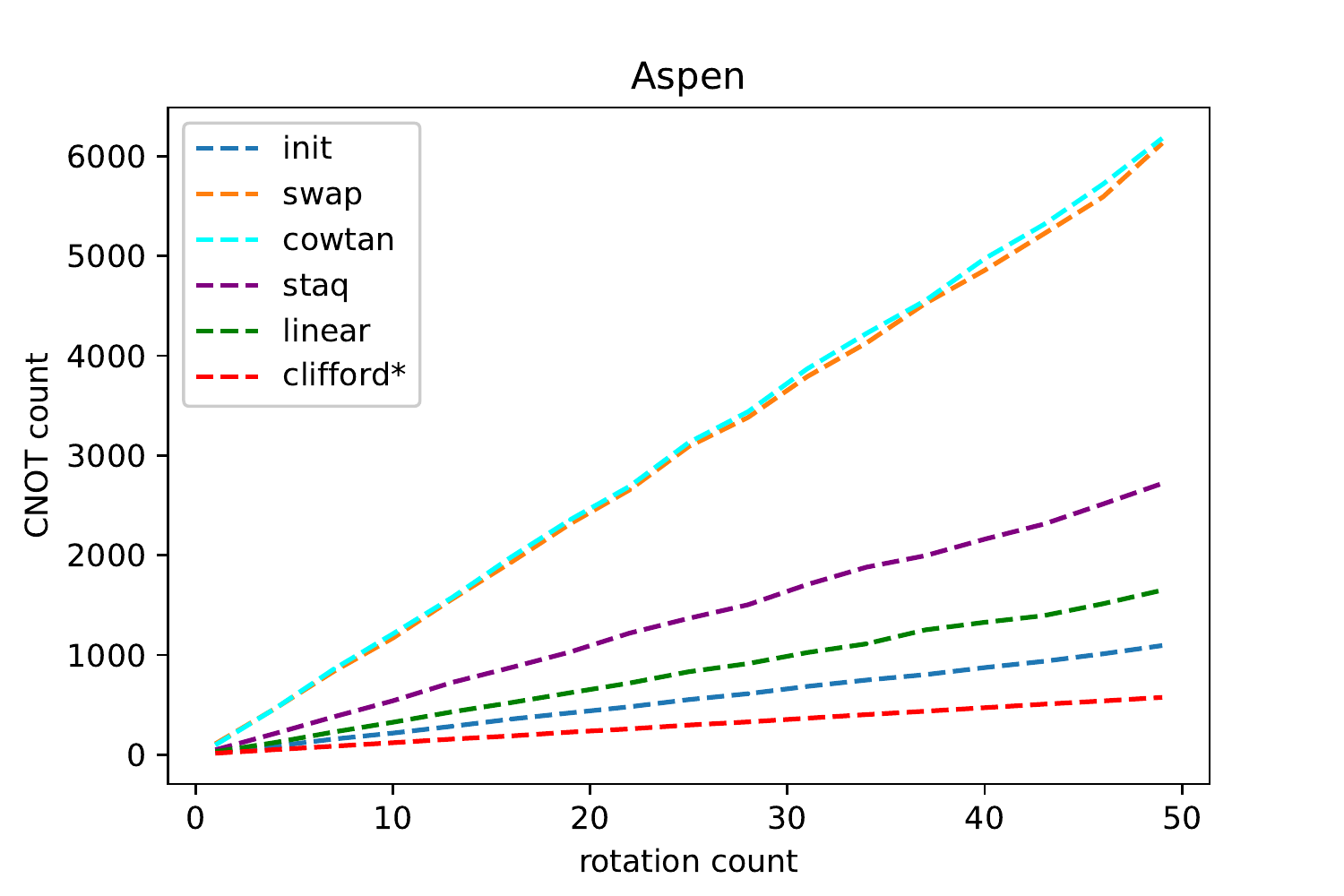}
    \includegraphics[scale=0.6]{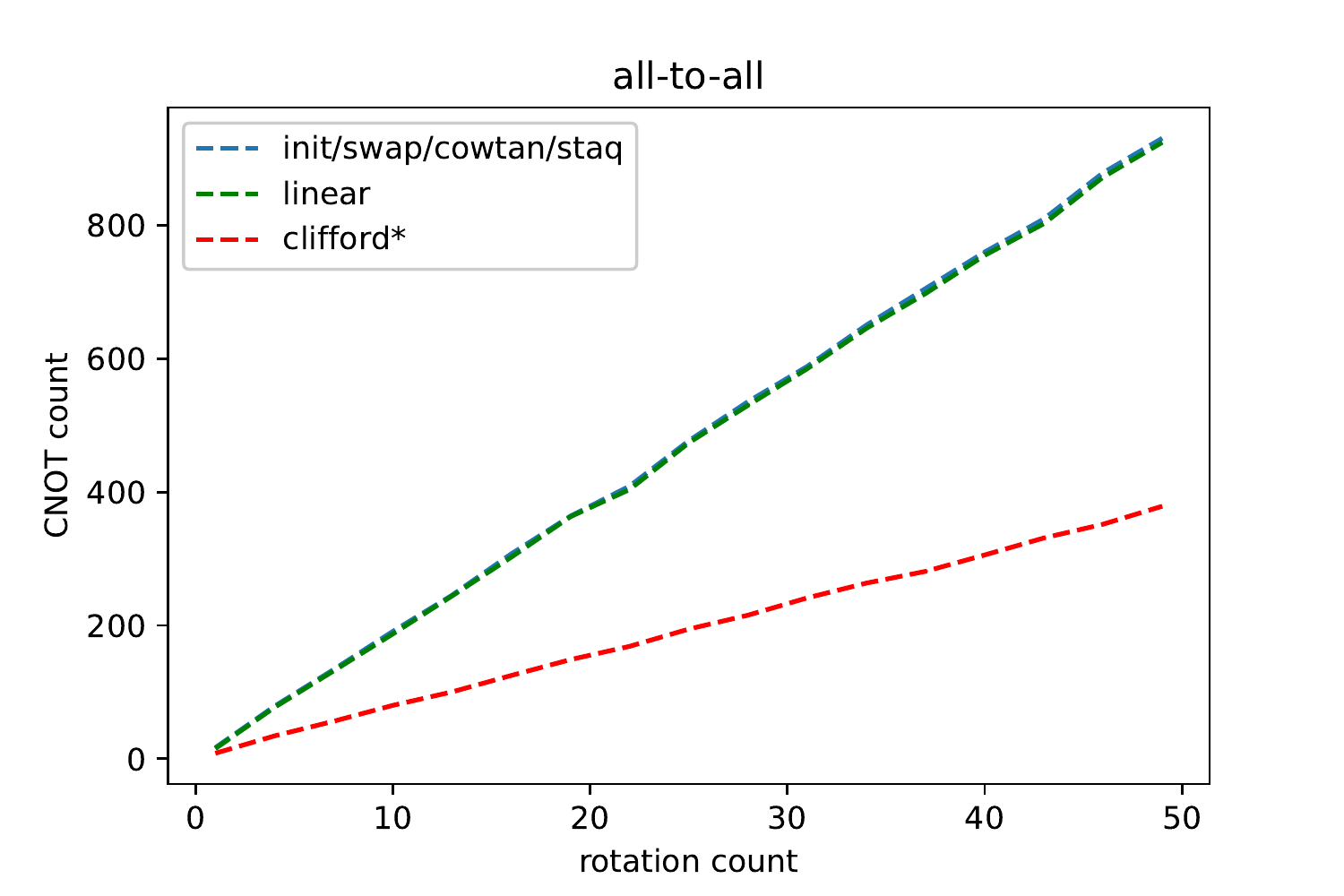}
    
    \caption{Random Pauli rotation sequence. Each circuit is generated by picking at random a collection of Pauli operators (between 1 and 51 random operators, all distinct) and by implementing a naive circuit performing a corresponding sequence of rotations with non-Clifford angles. Each point is generated by picking 10 random operators.}
    \label{graph:pauli}
\end{figure}

\subsection{Scalable benchmarks}\label{ssec:scalable_benchs}
In order to have a better idea of the behavior of the \emph{clifford} method when used with low recursive depth, 
we ran the QAOA and random Pauli rotations benchmarks for the \emph{clifford} method with depths 0 and 1.
These depth values correspond to worst case complexities of $O(n^3m^2)$ and $O(n^4m^2)$ (including all the preprocessings).
Figures \ref{graph:qaoa_low_depth} and \ref{graph:pauli_low_depth} present the benchmarks results.

The results are pretty self explanatory: the \emph{clifford} method remains more than competitive compared to the SWAP insertion 
techniques or the Staq framework. Notice that the depth 0 method is nothing more than a greedy method that iteratively synthesizes the
non-Clifford Pauli rotations in the input circuit. Notice also that going from the greedy version (i.e. depth 0) to the depth 1 
version only marginally improves the (CNOT count) performances of the algorithm.

\begin{figure}[h]
    \centering
    \includegraphics[scale=0.6]{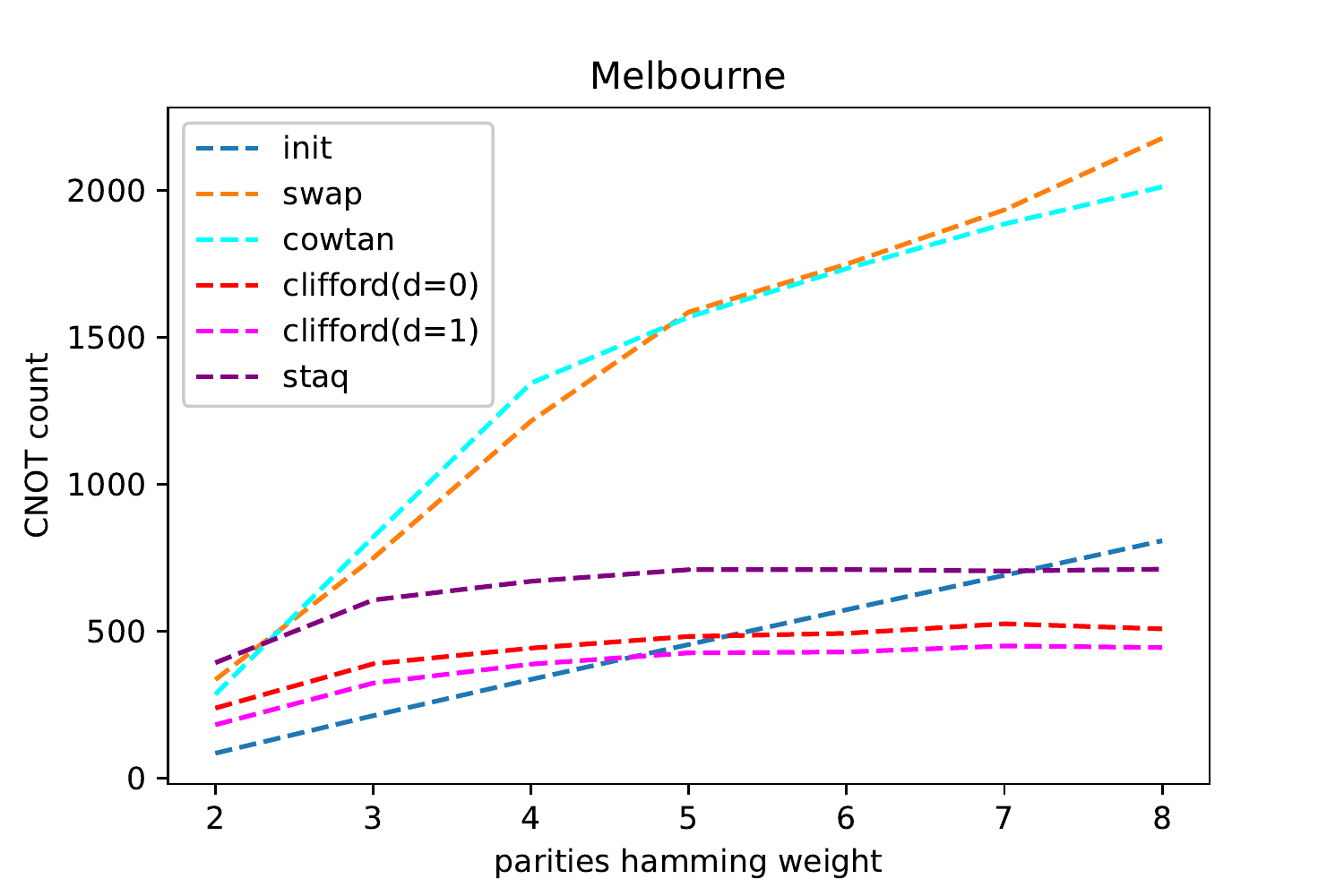}%
    \includegraphics[scale=0.6]{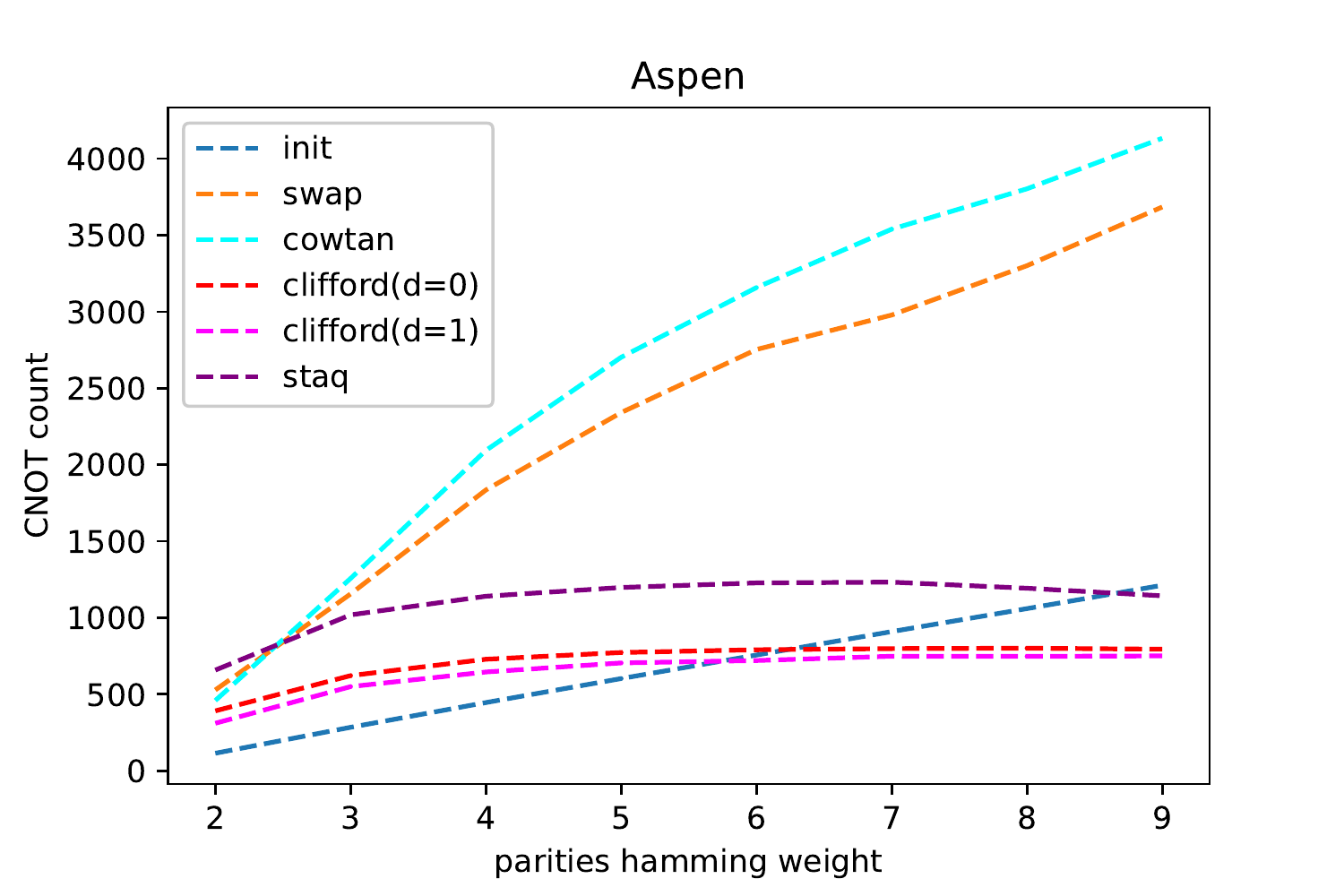}
    \includegraphics[scale=0.6]{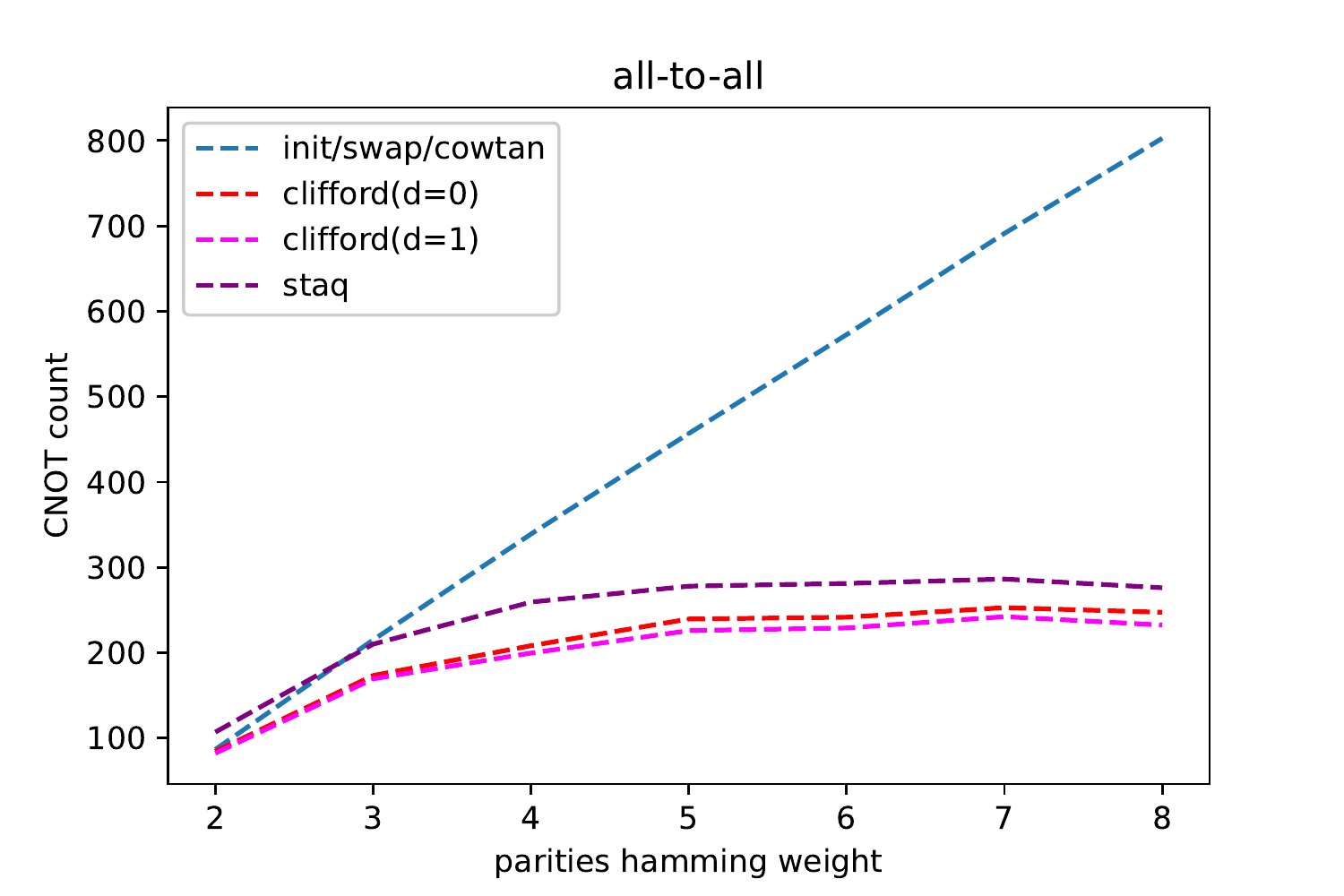}
    \caption{QAOA circuits for MAX-K-LIN-2. Each circuit is generated by picking $n^2$ random parities, without repetitions, of fixed hamming weight (between 2 and $n/2$). Each point is generated using 30 random instances.}
    \label{graph:qaoa_low_depth}
\end{figure}

\begin{figure}[h]
    \centering
    \includegraphics[scale=0.6]{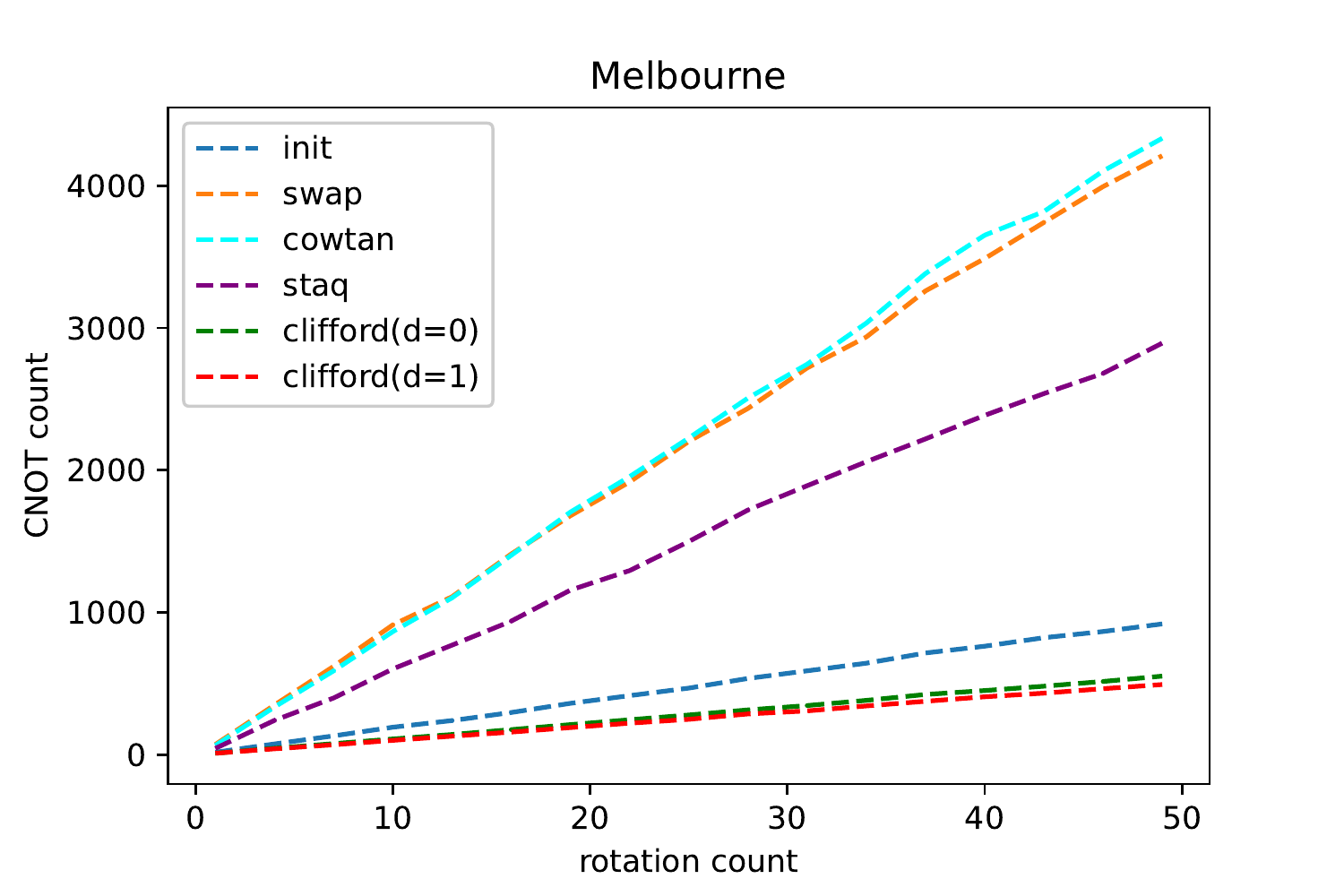}%
    \includegraphics[scale=0.6]{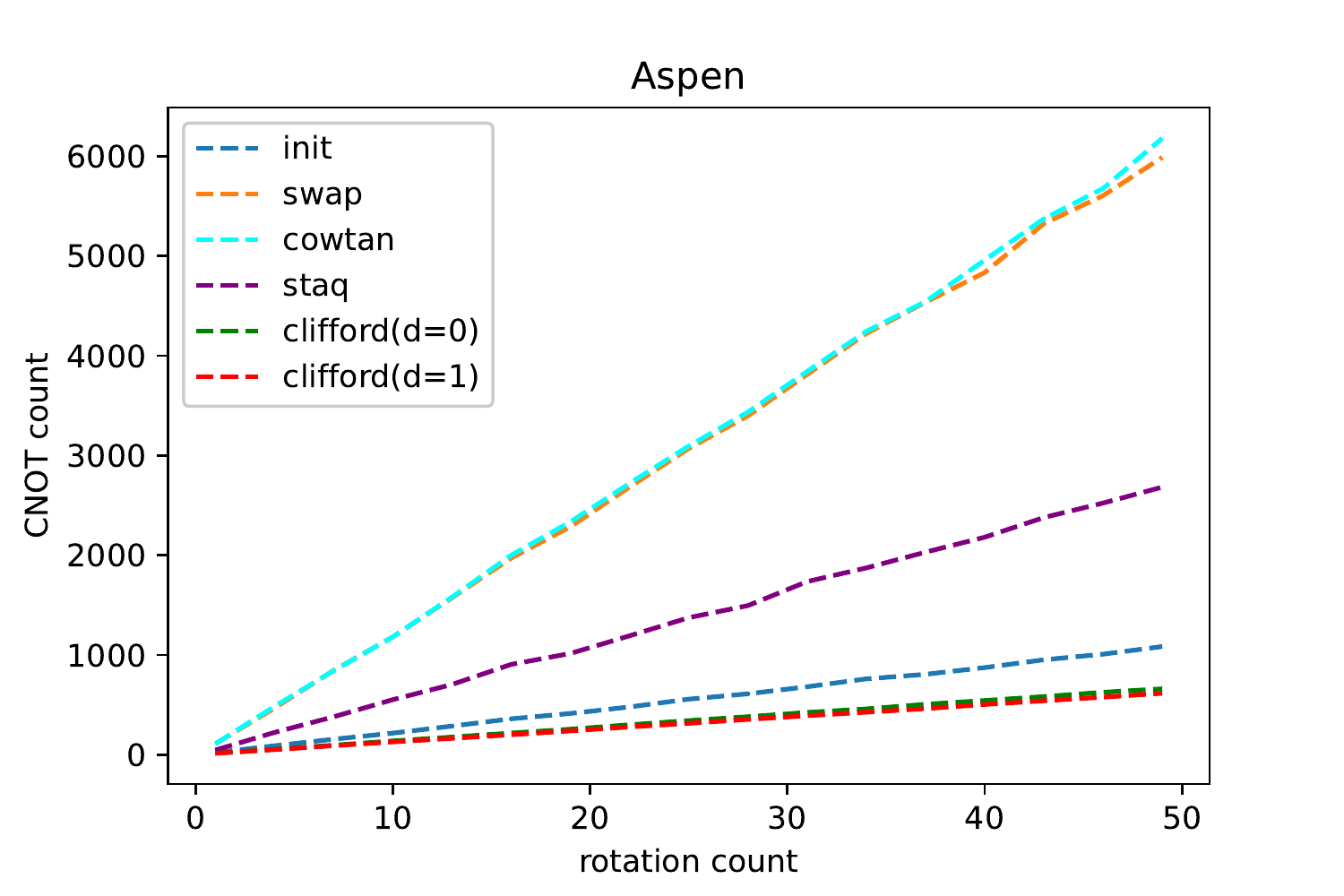}
    \includegraphics[scale=0.6]{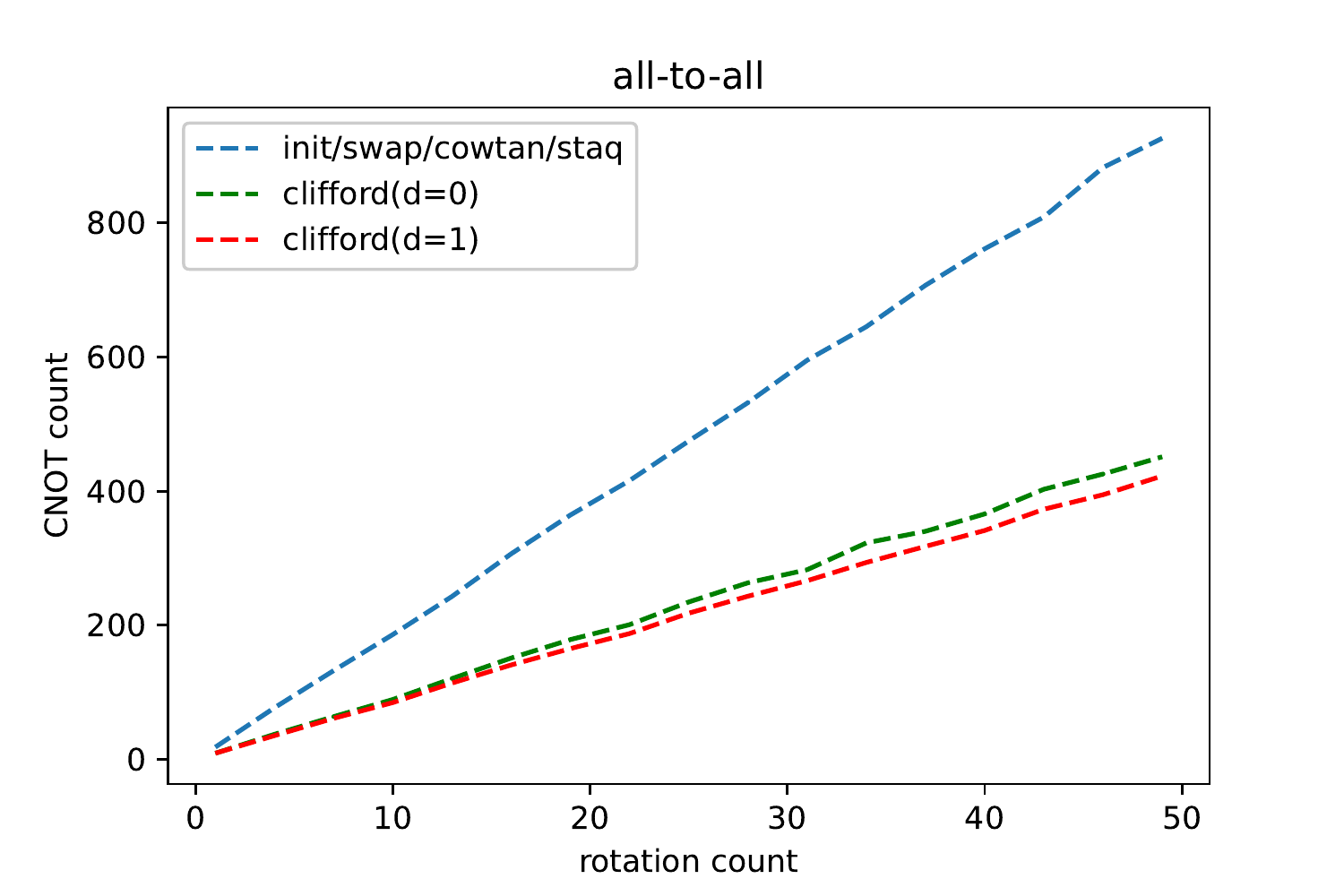}
    
    \caption{Random Pauli rotation sequence for low search depth. }
    \label{graph:pauli_low_depth}
\end{figure}

\section{Future work and possible extensions}

\subsection{Extension to other subgroups}
\noindent{\bf Beyond Clifford.} It is not clear how to extend this approach to groups larger than the Clifford group. It might be worth to investigate the higher level of the Clifford hierarchy and devise extraction routines for these of operators, even though they do not exhibit the same properties as the Clifford group.

\noindent{\bf Gaussian operators.} A potential candidate is the group of Gaussian operators. This group corresponds to operators that can be implemented via circuits of \emph{matchgates}. It has the nice feature of stabilizing the hierarchy of Hamiltonians that are bounded degree polynomials over a Clifford algebra \cite{Jozsa_2008}. It happens that these types of operators are the main ingredient used to construct UCCSD Ansätze for Fermionic dynamics (including VQE circuits for quantum chemistry or material science). For instance, in the quantum chemistry setting, this result entails that one can \emph{pull} all single excitation terms of an Ansatz to the end of the Ansatz, and conjugate the final Hamiltonian with these terms. The resulting circuit will have a reduced number of terms to implement, but these terms might be harder to implement. Hence it is not obvious that one can gain anything by synthesizing these via a naive approach.

\subsection{Changing the metric}

In this present work, we only developed algorithms that try to reduce the overall CNOT count of the output circuit. It is of course possible to change the metric to take into account more aspects of the final circuit. A good start would be to use finer hardware models and (roughly) compute the fidelity of each produced sub-circuit, picking the most faithful one. This simple approach has been proven to improve the overall circuit fidelity compared to straightforward gate count minimization in the SWAP insertion setting. Similarly one can also aim at reducing entangling depth instead of entangling gate count. 

\subsection{Global approach for synthesis of Pauli rotation sequences}

In this work, we take a local (with finite depth search) approach to tackle the problem of re-synthesis of a sequence of Pauli rotations. It could be interesting to apply global techniques for the synthesis of groups of commuting rotations to solve this problem. This would probably lead to better results for standard Clifford + T circuits. It remains unclear if these approaches can behave well for NISQ era circuits.
A recent work by Gheorghiu et al. \cite{gheorghiu2020reducing} tackles the problem by extracting and re-synthesizing phase polynomials out of the input circuit. This approach seems to perform quite well on some circuits and far worse than our method on others. For instance, Table \ref{fig:table_mosca} sums up the performances of their two splitting heuristics and our SWAP and Clifford based compilers on a $3\times 3$ grid architecture. Notice that the SWAP based compiler already outperforms their splitting techniques on such simple circuits. This difference in performances tends to shrink when compiling larger circuits which hints that their method has a far better scaling than ours.
\begin{table}[h]
    
    \caption{Compilation overhead (in CNOT gates) for some standard circuit on a $3\times 3$ grid architecture. The two right-most columns are taken from \cite{gheorghiu2020reducing}.}
    \centering
    
    \begin{tabular}{|c|c|c|c|c|c|}
\hline
circuit & init & swap &  clifford$\star\dagger$ & CNOT-OPT-A & CNOT-OPT-B\\
\hline
tof\_5 & 42 & 92.8\% & \textcolor{green}{23.8\%}  & 140.82\% & 138.78\%\\
mod\_mult\_55 & 48 & 162.5\% & \textcolor{green}{29.1\%} & 321.82\% & 203.64\%\\
barenco\_tof\_5 & 72 & 116.7\% & \textcolor{green}{-16.7\%} & 245.24\% & 140.48\%\\
grover\_5 & 288 & 89.6\% & \textcolor{green}{42.4\%} & 116.67\% & 105.36\%\\
\hline
    \end{tabular}
    \label{fig:table_mosca}
\end{table}

\section{Conclusion and discussion}\label{sec:concl}

We presented a meta-heuristic called lazy synthesis that exploits efficient representations of elements in a subgroup of the unitary group in order to compile an input quantum circuit into an architecture compliant circuit. We showed how this meta-heuristic can be used to reformulate a standard SWAP insertion algorithm from the literature and produced two new compilation algorithms based on the partial synthesis of linear Boolean operators and Clifford operators.
Finally, we ran benchmarks on various classes of circuits, providing evidence that these algorithms are competitive in a NISQ setting.

While our algorithms seems to be well behaved on NISQ oriented quantum circuits, it remains unclear of their scalability to tackle very large Clifford + T quantum circuits. It is very likely that their inherently local structure will hinder performances on large circuits.

\section*{Acknowledgments}

This work was supported in part by the French National Research Agency (ANR) under the research project SoftQPRO ANR-17-CE25-0009-02, and by
the DGE of the French Ministry of Industry under the research project PIAGDN/QuantEx P163746-484124.

\bibliographystyle{alphaurl}
\bibliography{biblio}

\appendix
\section{Dealing with the final operator}\label{sec:opt}

In this section we detail how to classically emulate any final non-trivial Clifford operator.
This encompasses the case of a final permutation or linear operator, even though the case of a final permutation can be trivially dealt with. 

\subsection{Expected value of some observable}
In this setting, we assume that we are given as input, both the circuit $C_{in}$ to compile and some final observable $H$ to evaluate and the end of the circuit execution. In short, we need to compute:
$$ \bra{0}C_{in}^\dagger H C_{in} \ket{0} $$
Using either the linear operator synthesis approach of the Clifford approach, we end up producing a circuit $C_{out}$ and a final linear/Clifford operator $A$ such that:

$$ \bra{0}C_{in}^\dagger H C_{in} \ket{0} = \bra{0}C_{out}^\dagger A^\dagger H A C_{out} \ket{0}  $$

Lets further assume that $H$ is given to us in the Pauli basis. That is:
$$ H = \sum_i \alpha_i P_i $$
with $\alpha_i$ some real coefficients, and $P_i \in \mathcal{P}_n$ some Pauli operators.
Then, sampling the new observable $A^\dagger H A = \sum_i \alpha_i A^\dagger P_i A = \sum_i \alpha_i P_i'$ on the output circuit is equivalent to sampling the input observable on the input circuit.

Sampling this observable using the standard techniques of co-diagonalization of its terms is no more costly (in terms of shots) than sampling the original $H$.

\subsection{Sampling bit-strings}
If we are required to provide some samples taken according to the final distribution induced by $C_{in}\ket{0}$, things are bit trickier.

Our algorithms output a pair $C_{out}, A$ such that $A C_{out} = C_{in}$ and we would like to emulate sampling of $AC_{out}\ket{0} = C_{in}\ket{0}$. To do so we proceed as follows.

Defining $\mathcal{Z} =\{Z_i,\ i\in [n]\}$ the set of local $Z$ operators on each qubit. Sampling bit-strings out of some quantum state over $n$ qubits boils down to iteratively evaluating the value of these operators in any order (since they commute). We would like to evaluate this collection of operators on state $AC_{out}\ket{0}$. This is equivalent to evaluating the collection of Pauli operators $A^\dagger \mathcal{Z} A = \{P_i = A^\dagger Z_i A,\ i\in [n]\}$ on state $C_{out}\ket{0}$. These operators, however, might not be diagonal operators, and thus cannot be directly evaluated using standard computational basis measurements. Nevertheless, these operators commute with one another since the $Z_i$ commute. Hence, one can co-diagonalize them using a Clifford circuit $C_{diag}$. By construction, the new collection of Pauli operators $\{s_i Q_i = C_{diag}P_i C_{diag}^\dagger,\ i \in [n]\}$ are diagonal operators (hence products of $Z_j$ and $I_j$) times some phase $s_i = \pm 1$. Sampling these operators on state $C_{diag} C_{out}\ket{0}$ is, by construction, equivalent to evaluating operators in $\mathcal{Z}$ over state  $A C_{out}\ket{0}$. Figure \ref{fig:codiag} depicts this sequence of conjugations.

Since the $Q_i$ are products of $Z_j$ and $I_j$ operators, they can be seen as computing parities over a subset of qubits. This gives us a simple algorithm to fix measurement results. We can sample some bit-string $w$ out of the quantum state $C_{diag} C_{out}\ket{0}$ and output a new bit-string $w'$ with $w'_i =\delta_{s_i}^{-1} \oplus \sum_{j\in Q_i} w_j$ where the sum is modulo 2. This operation boils down to applying an affine system over $\mathbb{F}_2^n$ described by the $(s_i, Q_i)$ operators.

For example, let's assume that we need to sample bit-strings over 2 qubits. Let's assume that after conjugation through $A$ and co-diagonalization, we get operators $Q_1 = Z \otimes Z$, $s_1 = -1$ and $Q_2 = Z \otimes I$, $s_2= 1$. These operators can be summed up via the following affine system over $\mathbb{F}_2^2$:
$$ x \mapsto Lx + b $$
with
$L = \begin{pmatrix}
1 & 1\\
1 & 0
\end{pmatrix}$
and $b=(1, 0)^T$. Any bit-string sampled from state $C_{diag} C_{out}\ket{0}$ can be fixed by applying $L$ and adding $b$:
\begin{align*}
    00 \mapsto 10\\
    01 \mapsto 00\\
    10 \mapsto 01\\
    11 \mapsto 11\\
\end{align*}

\begin{figure}[h]
\centering
(a)\Qcircuit @C=0.9em @R=0.7em {
&&\multigate{4}{C_{out}}&\multigate{4}{A}&\gate{Z_1} &\qw\\
&&\ghost{C_{out}}       &\ghost{A}       &\gate{Z_2}&\qw\\
&&\ghost{C_{out}}       &\ghost{A}       &\vdots&\\
&&\ghost{C_{out}}       &\ghost{A}       &\vdots &\\
&&\ghost{C_{out}}       &\ghost{A}       &\gate{Z_n}&\qw
}~~(b)\Qcircuit @C=0.9em @R=0.7em {
& \multigate{4}{C_{out}}&\multigate{4}{P_1} &\multigate{4}{P_2} &\cdots&& \multigate{4}{P_n} &\multigate{4}{A}&\\
 & \ghost{C_{out}}       &\ghost{P_1}        &\ghost{P_2}        &&      & \ghost{P_n}        &\ghost{A} &\\
 & \ghost{C_{out}}          &\ghost{P_1}        &\ghost{P_2}        &\cdots&& \ghost{P_n}        &\ghost{A} &\\
  & \ghost{C_{out}}          &\ghost{P_1}        &\ghost{P_2}        &&      & \ghost{P_n}        &\ghost{A} &\\
 & \ghost{C_{out}}       &\ghost{P_1}        &\ghost{P_2}        &\cdots&& \ghost{P_n}        &\ghost{A} &
}

(c)\Qcircuit @C=0.9em @R=0.7em {
 & \multigate{4}{C_{out}}&\multigate{4}{C_{diag}}&\multigate{4}{s_1Q_1}&\multigate{4}{s_2Q_2}&\cdots&&\multigate{4}{s_nQ_n}&\multigate{4}{C_{diag}^\dagger}&\multigate{4}{A}&\\
 & \ghost{C_{out}}       &\ghost{C_{diag}}       &\ghost{s_1Q_1}       &\ghost{s_2Q_2}       &&      &\ghost{s_nQ_n}       &\ghost{C_{diag}^\dagger}       &\ghost{A} &\\
 & \ghost{C_{out}}          &\ghost{C_{diag}}       &\ghost{s_1Q_1}       &\ghost{s_2Q_2}       &\cdots&&\ghost{s_nQ_n}       &\ghost{C_{diag}^\dagger}       &\ghost{A} &\\
 & \ghost{C_{out}}          &\ghost{C_{diag}}       &\ghost{s_1Q_1}       &\ghost{s_2Q_2}       &&      &\ghost{s_nQ_n}       &\ghost{C_{diag}^\dagger}       &\ghost{A} &\\
 & \ghost{C_{out}}       &\ghost{C_{diag}}       &\ghost{s_1Q_1}       &\ghost{s_2Q_2}       &\cdots&&\ghost{s_nQ_n}       &\ghost{C_{diag}^\dagger}       &\ghost{A} &
}
\caption{The sampling fixing procedure. (a) we need to emulate sampling of the quantum state $AC_{out}\ket{0} = C_{in}\ket{0}$. This sampling procedure relies on the joint measurements of operators $Z_i$ for each qubit $i$. (b) Since $A$ is Clifford, we can commute the $Z_i$ with $A$, yielding a collection of commuting operators $P_i$ (non necessarily diagonal). (c) These operators can be jointly measured by co-diagonalizing them via a Clifford circuit $C_{diag}$, yielding a collection of diagonal operators $s_iQ_i$ where $Q_i$ are products of $Z$ operators and $s_i = \pm 1$ are phases. Once the $s_i Q_i$ are measured a simple linear system inversion allows us to emulate the sampling of the initial $Z_i$ operators. Hence, in practice, only $C_{out}$ and $C_{diag}$ are effectively performed on the quantum processor.}\label{fig:codiag}
\end{figure}
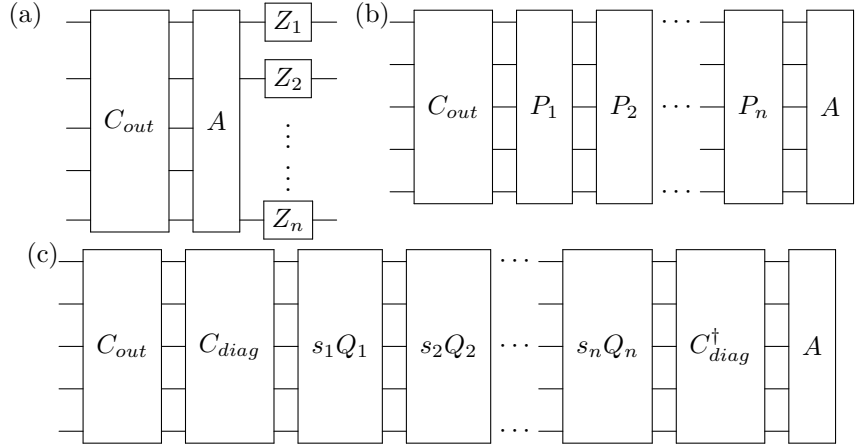

\noindent{\bf Remark on co-diagonalization.} To the best of our knowledge, \cite{van_den_Berg_2020} provides the best approach to produce a co-diagonalization circuit. They show that this task can be reduced to the synthesis of a linear Boolean operator which has a worst case complexity of $O(n^2/\log{n})$ in the case of non constrained architecture. A simple application of any architecture-aware CNOT synthesis heuristics, like \cite{de2020quantum}, gives an architecture-aware algorithm to produce $C_{diag}$. This argument is essential to claim that, in most of the cases, it is more efficient to synthesize $C_{diag}$ rather than directly synthesizing $A$. Indeed, the synthesis of an arbitrary Clifford operator is usually done by the synthesis of successive layers of Hadamard gates, Phase gates, CNOT gates or CZ gates. Most recent results show that three layers of two-qubit gates are necessary \cite{bravyi2020hadamard}, making it more affordable to use the co-diagonalization process.

\section{Recursive search of finite depth}\label{sec:rec_search}

In the three algorithms described in this paper, the extraction functions perform a recursive search over the next $w$ calls to the extraction in order to locally pick the subcircuit that will generate the least extraction overhead. This recursive search is introduced in \cite{hirata2011} for SWAP insertion and can be easily transposed to the linear Boolean operator and Clifford setting.

In practice, this is done by computing a tree of depth $w$ containing all the possible choices and associating to each leaf of this tree the sum of all sub-circuits scores on the path from the root to the leaf. In the algorithms presented in this paper, we simply used the CNOT count as a metric, but any other metric, such as overall fidelity, or depth can be used in this search.
Figure \ref{fig:tree_depth} depicts two search trees of depth 0 and 1. 

\begin{figure}[h]
   
\begin{center}

\begin{tikzpicture}
\node[rectangle, draw, inner sep=3pt](root) at (0, 0){+0};

\node[rectangle, draw, inner sep=3pt](n0) at (-2, -2){+6};
\node[rectangle, draw, inner sep=3pt](n3) at (0, -2){+6};
\node[rectangle, draw, inner sep=3pt](n4) at (2, -2){+6};
\draw[thick, blue] (root) edge node[left, circle, inner sep=1pt, draw, xshift=-0.2cm]{a} (n0);
\draw (root) edge node[left, circle, inner sep=1pt, draw, xshift=-0.1cm]{b} (n3);
\draw (root) edge node[right, circle, inner sep=1pt, draw, xshift=0.2cm]{c} (n4);

\node[rectangle, fill=green!50] at (-2, -3){+6};
\node[rectangle, fill=green!50] at (0, -3){+6};
\node[rectangle, fill=green!50] at (2, -3){+6};

\end{tikzpicture}\hspace{1cm}\begin{tikzpicture}
\node[rectangle, draw, inner sep=3pt](root) at (0, 0){+0};

\node[rectangle, draw, inner sep=3pt](n0) at (-2, -2){+6};
\node[rectangle, draw, inner sep=3pt](n3) at (0, -2){+6};
\node[rectangle, draw, inner sep=3pt](n4) at (2, -2){+6};
\draw (root) edge node[left, circle, inner sep=1pt, draw, xshift=-0.2cm]{$a$} (n0);
\draw[thick, blue] (root) edge node[left, circle, inner sep=1pt, draw, xshift=-0.1cm]{$b$} (n3);
\draw (root) edge node[right, circle, inner sep=1pt, draw, xshift=0.2cm]{$c$} (n4);

\node[rectangle, draw, inner sep=3pt](n00) at (-3, -4){+6};
\node[rectangle, draw, inner sep=3pt](n03) at (-2, -4){+6};
\draw (n0) edge node[left, circle, draw, inner sep=1pt]{$a$} (n00);
\draw (n0) edge node[right, circle, draw, inner sep=1pt]{$b$}(n03);

\node[rectangle, draw, inner sep=3pt](n30) at (0, -4){+3};
\draw (n3) edge node[left, circle, draw, inner sep=1pt]{$a$} (n30);

\node[rectangle, draw, inner sep=3pt](n43) at (2, -4){+6};
\node[rectangle, draw, inner sep=3pt](n44) at (3, -4){+8};
\draw (n4) edge node[left, circle, draw, inner sep=1pt]{$a$} (n43);
\draw (n4) edge node[right, circle, draw, inner sep=1pt]{$b$}(n44);

\node[rectangle, fill=red!50] at (-3, -5){+12};
\node[rectangle, fill=red!50] at (-2, -5){+12};
\node[rectangle, fill=green!50] at (0, -5){+9};
\node[rectangle, fill=red!50] at (2, -5){+12};
\node[rectangle, fill=red!50] at (3, -5){+14};

\end{tikzpicture}

\end{center}
    \caption{Search trees of depth 0 and 1. In the first tree, we stop the recursive search at depth 0. In this situation all possible choices {\protect\tikz \protect\draw node[circle, draw, inner sep=1]{$a$}; }, {\protect\tikz \protect\draw node[circle, draw, inner sep=1]{$b$}; }, and {\protect\tikz \protect\draw node[circle, draw, inner sep=1pt]{$c$}; } are equivalent since they produce sub-circuits of score $6$. Hence, we greedily pick the first choice {\protect\tikz \protect\draw node[circle, draw, inner sep=1]{$a$}; }. After exploring at depth $1$, we notice that choosing option {\protect\tikz \protect\draw node[circle, draw, inner sep=1]{$b$}; } (in blue) is the best local choice since it lead to a sub-tree with the least costly leaf (the middle leaf with cost $9$).}
    \label{fig:tree_depth}
\end{figure}

\section{Detailed examples}
This Appendix regroups detailed step-by-step runs of the three algorithms presented in this paper.

The circuit we will compile is the following:

\begin{tikzpicture}
\tikzstyle{oneqb} = [draw,fill=white,minimum size=1.5em]
\tikzstyle{swp} = [minimum size=7pt, inner sep=3pt, path picture={\draw[thick] (path picture bounding box.north west) -- (path picture bounding box.south east)(path picture bounding box.north east) -- (path picture bounding box.south west);}]
\tikzstyle{ctrl} = [fill,shape=circle,minimum size=5pt,inner sep=0pt]
\tikzstyle{targ} = [circle, draw, minimum size=7pt, inner sep=3pt, path picture={\draw[thick] (path picture bounding box.north) -- (path picture bounding box.south)(path picture bounding box.east) -- (path picture bounding box.west);}]
\coordinate (anchor_0) at (0.5, -0);
\coordinate (anchor_1) at (0.5, -1);
\coordinate (anchor_2) at (0.5, -2);
\coordinate (anchor_3) at (0.5, -3);
\coordinate (anchor_4) at (0.5, -4);
\coordinate (anchor_5) at (0.5, -5);
\draw (0, 0) node[](q0){$q_{0}$};
\draw (0, -1) node[](q1){$q_{1}$};
\draw (0, -2) node[](q2){$q_{2}$};
\draw (0, -3) node[](q3){$q_{3}$};
\draw (0, -4) node[](q4){$q_{4}$};
\draw (0, -5) node[](q5){$q_{5}$};
\draw (1.1, 0) node[oneqb](g0){\tiny H};
\draw (1.1, -4) node[oneqb](g1){$\sqrt{X}$};
\draw (2.2, 0) node[ctrl](c2){};
\draw (2.2, -4) node[targ](t2){};
\draw (3.3000000000000003, -4) node[ctrl](c3){};
\draw (3.3000000000000003, -2) node[targ](t3){};
\draw (4.4, -2) node[oneqb](g4){\tiny T};
\draw (5.5, -2) node[oneqb](g5){\tiny H};
\draw (3.3000000000000003, -1) node[oneqb](g6){$\sqrt{X}$};
\draw (6.6, -1) node[ctrl](c7){};
\draw (6.6, -5) node[targ](t7){};
\draw (7.699999999999999, -5) node[ctrl](c8){};
\draw (7.699999999999999, -3) node[targ](t8){};
\draw (8.799999999999999, -3) node[oneqb](g9){\tiny T};
\draw (9.799999999999999, 0) node[](e0){};
\draw (9.799999999999999, -1) node[](e1){};
\draw (9.799999999999999, -2) node[](e2){};
\draw (9.799999999999999, -3) node[](e3){};
\draw (9.799999999999999, -4) node[](e4){};
\draw (9.799999999999999, -5) node[](e5){};
\draw[thick] (q0-|anchor_0.east) -- (q0-|g0.west);
\draw[thick] (q4-|anchor_4.east) -- (q4-|g1.west);
\draw[thick] (q0-|g0.east) -- (q0-|c2.west);
\draw[thick] (q4-|g1.east) -- (q4-|t2.west);
\draw[thick] (t2) -- (c2);
\draw[thick] (q4-|t2.east) -- (q4-|c3.west);
\draw[thick] (q2-|anchor_2.east) -- (q2-|t3.west);
\draw[thick] (t3) -- (c3);
\draw[thick] (q2-|t3.east) -- (q2-|g4.west);
\draw[thick] (q2-|g4.east) -- (q2-|g5.west);
\draw[thick] (q1-|anchor_1.east) -- (q1-|g6.west);
\draw[thick] (q1-|g6.east) -- (q1-|c7.west);
\draw[thick] (q5-|anchor_5.east) -- (q5-|t7.west);
\draw[thick] (t7) -- (c7);
\draw[thick] (q5-|t7.east) -- (q5-|c8.west);
\draw[thick] (q3-|anchor_3.east) -- (q3-|t8.west);
\draw[thick] (t8) -- (c8);
\draw[thick] (q3-|t8.east) -- (q3-|g9.west);
\draw[thick] (q0-|c2.east) -- (q0-|e0.west);
\draw[thick] (q1-|c7.east) -- (q1-|e1.west);
\draw[thick] (q2-|g5.east) -- (q2-|e2.west);
\draw[thick] (q3-|g9.east) -- (q3-|e3.west);
\draw[thick] (q4-|c3.east) -- (q4-|e4.west);
\draw[thick] (q5-|c8.east) -- (q5-|e5.west);
\end{tikzpicture}

In all examples, we will compile for a linear-nearest-neighbor architecture with $6$ qubits.

For circuits displayed below, gates are red if they do not belong to the target subgroup $S$ and blue if they are in $S$. The gate that is currently being treated by the algorithm and the corresponding extracted subcircuit are displayed in green.

\subsection{Permutation based lazy synthesis}
\label{app:example_swap}

The execution of the algorithm is depicted as follows. 
Each row contains: the current output circuit, what is left of the input circuit (including the last treated gate, in green), the current permutation $\sigma$, the target qubits of the gate being extracted (if it is an extraction step) denoted $g$, and its mapping through $\sigma^{-1}$. When the gate is not compatible with the architecture, this mapping is displayed in red, and the extracted permutation $\pi$ and the new updated permutation $\sigma'$ are displayed.

\input{examples/example_swaps.tex}

\subsection{Linear reversible based lazy synthesis}
\label{app:example_linear}

Each row contains: the current output circuit, what is left of the input circuit (including the last treated gate, in green), the inverse of the current table $A^{-1}$.
If this is an update step, two columns of the table will be highlighted, corresponding to the two columns that were summed. The green column is summed into the cyan one.
If this is an extraction step, a row is highlighted corresponding to the row that was synthesized by the fan-in step. If the gate is non diagonal, a column is also highlighted corresponding to the column synthesized by the fan-out step.
Notice that the highlighted row corresponds to the index of the receiving qubits in the output circuit.

\begin{center}
\begin{adjustwidth}{-1.5cm}{}
\input{examples/example_cnots.tex}
\end{adjustwidth}
\end{center}

\subsection{Clifford based lazy synthesis}
\label{app:example_clifford}
Each row contains: the current output circuit, what is left of the input circuit (including the last treated gate, in green), the current Tableau data structure (or its inverse for extraction steps). The Tableau is represented as a $2n\times n$ array, even columns $2i$ represent the pauli operator $C X_i C^\dagger$, while odd columns $2i+1$ represent pauli operator $C Z_i C^\dagger$. For extraction step, the inverse Tableau is represented. This entails that, upon meeting a $T$ gate (the
only non Clifford gates in this example) on qubit $q$, one just need to read column $2q+1$ in order to know which half rotation needs to be implemented.
\begin{center}
\begin{adjustwidth}{-1.5cm}{}
\input{examples/example_clifford.tex}
\end{adjustwidth}
\end{center}
\end{document}